\UseRawInputEncoding
\documentclass[%
 reprint,
 showkeys,
 showpacs,
 superscriptaddress,
 amsmath,amssymb,
 aps,
pra,
]{revtex4-2}

\usepackage{url}
\usepackage{amsfonts}
\usepackage{mathrsfs}
\usepackage{amssymb}
\usepackage{amsthm}
\usepackage{physics}
\usepackage{graphicx}
\usepackage{dcolumn}
\usepackage{bm}
\usepackage{hyperref}
\usepackage{comment}
\usepackage{caption}
\usepackage{subcaption}
\usepackage{mathtools}
\usepackage{multirow}
\hypersetup{
    colorlinks,
    citecolor=blue,
    filecolor=blue,
    linkcolor=blue,
    urlcolor=blue
}

\newtheorem{theorem}{Theorem}[section]

\newtheorem{lemma}{Lemma}[section]

\newtheorem{definition}{Definition}[section]

\begin{document}

\DeclareRobustCommand{\pt}{\ensuremath{\mathcal{PT}}}

\preprint{APS/123-QED}

\title{Interaction-Endowed \pt-Symmetry and its Effects on Decoherence, Einselection, and Non-Markovianity in a Central Spin Model}

\author{Leoj Phoebe M. Esquierdo}
\affiliation{Department of Physics, School of Foundational Studies and Education, Mapúa University, Intramuros, Manila, Philippines}
\author{Lemuel John F. Sese}
\affiliation{Department of Physics, Pohang University of Science and Technology, Pohang, Korea}
\author{Rayda P. Gammag}
\email{rpgammag@mapua.edu.ph}
\affiliation{Department of Physics, School of Foundational Studies and Education, Mapúa University, Intramuros, Manila, Philippines}

\date{\today}

\begin{abstract}
We have introduced \pt-symmetry to a central spin model by adding a \pt-symmetric interaction with a tunable hermiticity parameter $\gamma$. Using the pseudo-Hermitian formalism, we applied a Dyson map to transform the non-Hermitian Hamiltonian to its Hermitian representation. We have found that the decoherence slows down as $\gamma$ increases, eventually ceasing at $\gamma=1$. We define a pseudo-Hermitian observable that commutes with the metric operator to add as the self-Hamiltonian. Einselection gradually forced the system to select the eigenstates of the self-Hamiltonian as $\gamma\rightarrow1$. The steady-state purity of the central spin states in the strong environment regime exhibits a paradoxical decrease as $\gamma$ increases. However, a turning point (minimum) corresponding to a maximum information dissipation to the spin bath is found. Finally, the Breuer-Laine-Piilo (BLP) measure, which is used to quantify the non-Markovianity of a quantum system, was evaluated for a finite time. The BLP measure in the strong environment regime exhibited similar turning point behavior, which means that the information backflow reaches a saturation point before declining. This decline signifies the point where \pt-symmetry starts shielding the central spin from the environment.
\end{abstract}

\maketitle


\section{\label{Introduction}Introduction}

In open quantum systems, a specific system of interest interacts with its environment, often called a bath. This generates entanglement between system and bath states, which inevitably destroys the coherences of the system states \cite{Joos1985TheEO}. However, depending on the nature of the interaction, certain observables get 'measured' by the bath \cite{Zurek_1981}. This monitoring paradoxically preserves a set of selected states and entangles it with specific bath states. This process is known as environmentally-induced superselection, or einselection, which results in a set of states called the pointer states \cite{Zurek_1982}. This concrete theory is known as the decoherence theory, which was first formulated to explain the emergence of classical behavior from quantum systems.

A well-known model for decoherence theory is a spin-1/2 system interacting with many bath spins through a dephasing interaction, also known as the central spin model \cite{Zurek_1982}. This was first used to prove einselection, where it was shown that the bath actively selects the eigenstates of the interaction Hamiltonian as pointer states. In addition, this was expanded to include a self-Hamiltonian, which describes the internal dynamics of the central spin. In this extended model, the self-Hamiltonian and the interaction Hamiltonian compete with each other to determine the pointer states. The rate of decoherence for this model was found to be Gaussian in time, given that the coupling energies are independent and have finite variance \cite{Cucchietti}.

On the other hand, parity-time-symmetric (\pt-symmetric) extensions have also been applied to decoherence models, particularly in spin-boson models, where a spin-1/2 system interacts with a thermal bath, modeled as bosonic modes \cite{Gardas,Zhang_Chen,Dey}. $\mathcal{P}$ and $\mathcal{T}$ operators are the parity-inversion and time-reversal operators, respectively. The parity-inversion operator flips the sign of the spatial coordinate, while the time-reversal operator reverses the direction of time. \pt-symmetric systems are invariant under the combined action of these operators, which are typically modeled as systems with balanced gain and loss \cite{Musslimani_Makris_El-Ganainy_Christodoulides_2008, Makris_El-Ganainy_Christodoulides_Musslimani_2008}. Non-Hermitian Hamiltonians that possess this symmetry have real eigenenergies.

Through the pseudo-Hermitian formalism, a Dyson transform can be applied to these Hamiltonians to map them to an equivalent Hermitian representation. However, this mapping is only defined as long as an antilinear symmetry is present in the non-Hermitian Hamiltonian, which is \pt \ in this case. Nevertheless, this formalism is enough to elucidate the interesting phenomena that \pt-symmetry has to offer. \pt-symmetry has been applied by Gardas, Deffner, and Saxena \cite{Gardas} to the spin-boson model through an identical \pt-symmetric self-Hamiltonian and interaction Hamiltonian acting on the central spin. The central spin exhibited a critical slowing down in decoherence as the system approaches the boundary between the unbroken and broken \pt-symmetry regimes, where decoherence freezes out completely. Another study carried out by Dey, Raj, and Goyal \cite{Dey} explored the case where the bath Hamiltonian is endowed with \pt-symmetry. This enhanced the central spin coherence time, even with weak couplings. These studies show that in the presence of an unbroken \pt-symmetry, coherences are preserved for much longer timescales. Such results imply that \pt-symmetry can offer a novel strategy for controlling and mitigating decoherence. Unfortunately, \pt-symmetric extensions of central spin model are sparse.

The biggest difference between spin-boson model and central spin model is that the latter naturally exhibits non-Markovian characteristics \cite{Prokofev2000}. To be specific, the central spin model is marked by a dependence of its current state to its past state. When coupled to a spin bath, the information of the past configurations of the central spin that spreads into its environment can backflow. This can affect its current state, which often leads to recoherence and longer decoherence times compared to when it interacts with a thermal bath \cite{Breuer_Laine_Piilo_2009,Jing}. In addition, a recent study has shown that spin-boson model endowed with \pt-symmetry shared a similar behavior \cite{Zhang_Chen}. However, this study was not able to show whether \pt-symmetry can induce non-Markovian features.

In this paper, we propose an ansatz interaction Hamiltonian, $\hat{H}_{\mathcal{SE}}$, endowed with \pt-symmetry. We then extend $\hat{H}_{\mathcal{SE}}$ to include a transverse self-Hamiltonian. Its motivation, along with further details on the construction of this Hamiltonian, is discussed in Section \ref{Heading2}. We show that, upon adding this self-Hamiltonian, the \pt-transition point becomes distinct from the exceptional point: the former marks the boundary between the broken and unbroken \pt-symmetry regimes, while the latter is where the eigenvalues and eigenstates coalesce. We restrict our analysis to the unbroken \pt-symmetry regime, where the eigenenergies are real, which justifies the use of conventional quantum mechanics via the Hermitian representation. Using this formalism, in Section \ref{Heading3} we explore whether effects such as recoherence and suppression of decoherence arise in the central spin model when endowed with a \pt-symmetric interaction. In Section \ref{Heading4}, we also investigate the effect of unbroken \pt-symmetry on einselection, motivated by the fact that the interaction Hamiltonian in this model does not commute with the self-Hamiltonian yet preserves the \pt-symmetric structure. Finally, in Section \ref{Heading5} we examine the trend in non-Markovianity arising from information backflow between the spin bath and the central spin, to further elucidate the mechanisms driving the observed changes in decoherence and einselection in this model.

\section{\label{Heading2}\pt-symmetric Central Spin Model}

The Hamiltonian of interest describes the interaction of the system $\mathcal{S}$ with an environment $\mathcal{E}$. It can be written as
\begin{equation}
    \hat{H}_{\mathcal{SE}}=\hat{V}_{\mathcal{S}}\otimes \hat{V}_{\mathcal{E}}=\frac{1}{2}\sum_{k=1}^{N}g_{k}\left(i\gamma\hat{\sigma}_{0}^{y}+\hat{\sigma}_{0}^{z}\right)\otimes \hat{\sigma}_{k}^{z},
\end{equation}
where $g_{k}$ is the coupling constant between the central spin and the $k$th bath spin, and $\gamma$ is the Hermiticity parameter. This Hamiltonian is a \pt-symmetric extension of the dephasing interaction used in \cite{Cucchietti}, where the \pt-symmetry is incorporated through the addition of $i\gamma\hat{\sigma}_{0}^{y}$ to $\hat{V}_{S}$.

Any \pt-symmetric Hamiltonian can be mapped to an equivalent Hermitian representation through a Dyson map $T$, such that
\begin{equation}
    \hat{h}=T\hat{H}T^{-1},
    \label{push_forward}
\end{equation}
where $\hat{h}$ is the Hermitian representation of $\hat{H}$. This Dyson transformation can be derived from the metric operator $\eta$ specific to that Hamiltonian. This can be done by exploiting the algebra of the spin operators. The metric operator must be Hermitian, hence we let
\begin{equation}
    \eta=e^{-Q},
    \label{metric_operator}
\end{equation}
where $Q=\sum_{i}q_{i}\hat{\sigma}_{0}^{i}$, $\hat{\sigma}_{0}^{i}$ being the $i$th-Pauli spin operator acting on the central spin.

To identify $Q$, we recall the condition of self-adjointness in pseudo-Hermitian formalism,
\begin{equation}
    \hat{O^{\#}}=\hat{O},
    \label{self-adjointness}
\end{equation}
where $\hat{O}^{\#}$ is the pseudo-Hermitian adjoint of an arbitrary non-Hermitian operator defined as
\begin{equation}
    \hat{O}^{\#}=\eta \hat{O}^{\dagger}\eta^{-1}.
\end{equation}
This ensures that the Dyson map will transform the given operator into a Hermitian representation.

Now, we can apply this self-adjointness condition to derive the metric operator for $V_{\mathcal{S}}$,
\begin{equation}
    \hat{V_{\mathcal{S}}}=e^{Q}\hat{V_{\mathcal{S}}}^{\dagger}e^{-Q}.
    \label{adjoint}
\end{equation}
A reasonable ansatz for $Q$ is $\theta\hat{\sigma}_{0}^{x}$, where $\theta\in \mathbb{R}$ is an unknown coefficient, since $\hat{V}_{\mathcal{S}}$ involves both $\hat{\sigma}_{0}^{y}$ and $\hat{\sigma}_{0}^{z}$. Plugging this into equation (\ref{adjoint}),
\begin{equation}
    \hat{\sigma}_{0}^{z}+i\gamma\hat{\sigma}_{0}^{y}=e^{\theta\hat{\sigma}_{0}^{x}}\left(\hat{\sigma}_{0}^{z}-i\gamma\hat{\sigma}_{0}^{y}\right)e^{-\theta\hat{\sigma}_{0}^{x}}.
    \label{similarity_transformation}
\end{equation}
To get $\theta$, we can simplify equation (\ref{similarity_transformation}), hence
\begin{equation}
    \eta=e^{\tanh^{-1}{\left(\gamma\right)}\hat{\sigma}_{0}^{x}}.
    \label{eta}
\end{equation}
The Dyson map $T$ can then be easily solved 
\begin{equation}
    T=\sqrt{\eta}=e^{\frac{1}{2}\tanh^{-1}{\left(\gamma\right)}\hat{\sigma}_{0}^{x}}.
\end{equation}
Applying the Dyson map to $H_{\mathcal{SE}}$,
\begin{equation}
    \hat{h}_{\mathcal{SE}}=T\hat{H}_{\mathcal{SE}}T^{-1}=\frac{1}{2}\sum_{k=1}^{N}g_{k,\gamma}\hat{\sigma}_{0}^{z}\otimes\hat{\sigma}_{k}^{z}.
    \label{effective_coupling}
\end{equation}
Here, the \pt-symmetric Hamiltonian is mapped to a dephasing interaction with the rescaled couplings, $g_{k,\gamma}=\sqrt{1-\gamma^{2}}g_{k}$.

We want the system to include a self-Hamiltonian for the central spin. However, this is not as straightforward as in conventional quantum mechanics. Pseudo-Hermitian observables must be self-adjoint with respect to the pseudo-Hermitian adjoint. As such, these observables are defined as the inverse mapping of the observables of the Hermitian representation through the Dyson map defined in equation (\ref{push_forward}),
\begin{equation}
    O=T^{-1}oT.
\end{equation}
Here, $o$ and $O$ are observables of the Hermitian representation and the corresponding pseudo-Hermitian observable, respectively.

To be able to add a self-Hamiltonian to the dephasing interaction conveniently, the chosen self-Hamiltonian must simultaneously be an observable of the \pt-symmetric Hamiltonian and its Hermitian representation. By construction,
\begin{equation}
    [\eta,\hat{\sigma}_{0}^{x}]=0,
    \label{external_field}
\end{equation}
which means that $\hat{\sigma}_{0}^{x}$ is also a pseudo-Hermitian observable with respect to our metric operator, satisfying both conditions. Then, our extended Hamiltonian is
\begin{equation}
    \hat{h}=\hat{h}_{\mathcal{S}}+\hat{h}_\mathcal{SE}=\omega_{0}\hat{\sigma}_{0}^{x}+\frac{1}{2}\sqrt{1-\gamma^{2}}\hat{\sigma}_{0}^{z}\otimes\sum_{k=1}^{N}g_{k} \hat{\sigma}_{k}^{z}.
    \label{Htot}
\end{equation}
Note that the choice of self-Hamiltonian in equation (\ref{external_field}) is done to preserve the pseudo-Hermiticity of the original non-Hermitian Hamiltonian. Moreover, choosing a different observable as the self-Hamiltonian in the Hermitian representation will yield non-Hermitian operators as corresponding pseudo-Hermitian observables, which will complicate experimental designs intended to validate the theory.

The Hamiltonian in equation (\ref{Htot}) possesses a property that is unique in this construction, i.e., its \pt-transition point is distinct from its exceptional point. The \pt-transition point refers to point/s in the parameter space where \pt-symmetry is spontaneously broken, which is characterized by a divergent metric operator. For our metric operator in equation (\ref{eta}), this occurs at $\gamma=1$. On the other hand, the exceptional point refers to a point in the parameter space where some, if not all, eigenvectors and eigenvalues coalesce. At $\gamma=1$, the interaction Hamiltonian is defective. However, the total Hamiltonian is still diagonalizable due to the self-Hamiltonian. As such, the exceptional point can only appear when $\omega_{0}=0$. Thus, the exceptional point is at $(\gamma,\omega_{0})=(1,0)$.

Since the equivalence of the pseudo-Hermitian formalism with the Hermitian representation is proven in Appendix \ref{app:pseudo}, further calculations from here on will be done in the Hermitian representation.

\section{\label{Heading3}\pt-symmetry Effects on Decoherence}

The Hamiltonians in equations (\ref{effective_coupling}) and (\ref{Htot}) are strikingly similar to the Hamiltonian used in \cite{Cucchietti} to investigate the decoherence dynamics in the central spin model. In this section, we employ a similar methodology, outlined in Appendix \ref{Review_Cucchietti}, to explore the effect of \pt-symmetry on this dephasing interaction.

Upon transformation as shown in equation (\ref{effective_coupling}), the coupling energies are rescaled by a factor of $\sqrt{1-\gamma^{2}}$. To explicitly show this change, we consider a pure initial state of the bath spins in the computational basis
\begin{equation}
    \ket{\mathcal{E}(0)}=\sum_{n=0}^{2^{N}-1}c_{n}\ket{n}.
\end{equation}
The coupling energies are the eigenvalues of $V_{\mathcal{E}}$ when acting on the computational basis
\begin{align}
    V_{\mathcal{E}}\ket{n}=B_{n}\ket{n}=\left(\sum_{k=1}^{N}(-1)^{n_{k}}g_{k,\gamma}\right)\ket{n},
\end{align}
where $n_{k}$ is the $k$th-digit of the binary form of $n$.

In the unbroken \pt-symmetry regime, $\sqrt{1-\gamma^{2}}<1$, which means that the coupling energies are decreased. In the large bath limit ($N\gg1$), the coupling energy $B_{n}$ becomes a random variable $B$. The characteristic function $\Phi(B)$, which is the Fourier pair of the decoherence function, is the probability distribution of this random variable. The variance $s_{N}^{2}$ of the probability distribution absorbs the square of the scaling factor, i.e., $s_{N}^{2}\rightarrow(1-\gamma^{2})s_{N}^{2}$. This transforms the probability distribution to
\begin{equation}
    \Phi(B)=\frac{1}{\sqrt{2\pi s_{N}^{2}(1-\gamma^{2})}}\exp\left[-\frac{\left(B-\bar{B}_{N}\right)^{2}}{2s_{N}^{2}(1-\gamma^{2})}\right].
    \label{characteristic_function}
\end{equation}
Notice that as $\gamma\rightarrow1$,
\begin{equation}
    \lim_{\gamma\rightarrow1}\Phi(B)=\delta(B-\Bar{B}_{N}).
\end{equation}

Applying Fourier transform to the characteristic function, we can derive the decoherence factor,
\begin{equation}
    r(t)=\int e^{-iBt}\Phi(B)dB=e^{i\Bar{B}_{N}t}e^{-(1-\gamma^{2})s_{N}^{2}t^{2}/2},
\end{equation}
with its magnitude being
\begin{equation}
    |r(t)|=e^{-(1-\gamma^{2})s_{N}^{2}t^{2}/2}.
\end{equation}
The Gaussian profile of the decoherence is preserved, with a simple caveat, which is the additional scaling factor in the exponent of the decoherence factor.

We know that the states of the central spin will completely decohere when $|r(t)|=0$, but the $\left(1-\gamma^{2}\right)$ in the exponent will stretch the decoherence timescale such that $\tau_{D}=\sqrt{\frac{2}{(1-\gamma^{2})s_{N}^{2}}}$. This stretching of the decoherence timescale can be seen in Figure \ref{Decoherence_Factor_Plot}.

\begin{figure}[t]
    \centering
    \includegraphics[width=0.9\linewidth]{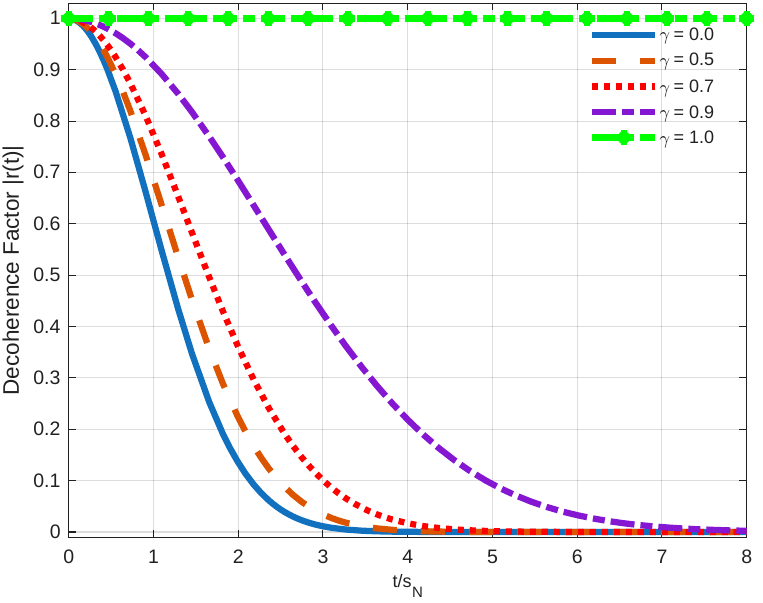}
    \caption{Decoherence factor for different values of $\gamma$. For $\gamma\in[0,1)$, the decoherence factor vanishes at $t\rightarrow\infty$ on larger timescales as $\gamma\rightarrow1$. At the \pt-transition point ($\gamma=1$), the decoherence factor becomes constant.}
    \label{Decoherence_Factor_Plot}
\end{figure}

Approaching the \pt-transition point, $|r(t)|\rightarrow1$, asymptotically losing its time-dependence completely and its decoherence timescale becoming effectively infinite. The central spin seems as if entirely shielded from the bath interactions. However, we also know that in a central spin model, information that spreads to the bath can backflow, which gives this model its characteristic non-Markovian dynamics. Moreover, in the absence of a bath Hamiltonian, the memory of a spin bath is infinite. We suspect that this slowing down of decoherence might be a consequence of the effect of the presence of an unbroken \pt-symmetry on the information backflow.

\section{\label{Heading4}\pt-symmetry Effects on Einselection and Pointer States}

In this section, we will look at the purity of an arbitrary Bloch vector of the central spin evolving under the Hamiltonian in equation (\ref{Htot}) to see whether the slowing down of decoherence might be hinting at a change in the information backflow. Moreover, we wish to see how \pt-symmetry will affect the einselection facilitated by the interaction Hamiltonian in the presence of a transverse self-Hamiltonian.

It can be observed in equation (\ref{Htot}) that the self-Hamiltonian and interaction Hamiltonian are incompatible. In contrast, $V_{\mathcal{E}}$ commutes with the total Hamiltonian. The time-evolution operator generated by the total Hamiltonian can then be expanded into
\begin{equation}
    \hat{U}(t)=\sum_{n=0}^{2^{N}-1} \hat{U}_{B_{n}}\otimes \ket{n}\bra{n},
\end{equation}
where $\hat{U}_{B_{n}}=I\cos{\left(\Omega_{n}t\right)}-i\frac{\left(B_{n}\hat{\sigma}_{0}^{z}+\omega_{0}\hat{\sigma}_{0}^{x}\right)}{\Omega_{n}}\sin{\left(\Omega_{n}t\right)}$ and $\Omega_{n}=B_{n}^{2}+\omega_{0}^{2}$.

For any arbitrary initial Bloch vector $\vec{p}(0)$, we will denote 
\begin{equation}
    \vec{p}(t,B)=\hat{U}_{B}(t)\vec{p}(0)\hat{U}_{B}^{\dagger}(t),
\end{equation}
so that the partial trace operation can be condensed,
\begin{equation}
    \vec{p}(t)=\int \vec{p}(t,B)\Phi(B)dB.
    \label{reduced_Bloch_vector}
\end{equation}
From here, we use an arbitrary initial pure Bloch vector $\vec{p}(0)$,
\begin{equation}
    \vec{p}(0)=p_{x}(0)\hat{\sigma}_{0}^{x}+p_{y}(0)\hat{\sigma}_{0}^{y}+p_{z}(0)\hat{\sigma}_{0}^{z}. 
    \label{arbitrary_initial_Bloch_vector}
\end{equation}
These components represent the expectation value of their corresponding spin operator, which can be interpreted as the spin-polarization about that axis. We can now apply the time-evolution operator on $\vec{p}(0)$ to get the components of $\vec{p}(t,B)$,
\begin{widetext}
\begin{subequations}
\begin{align}
    p_{x}(t,B)&=p_{x}(0)\frac{\omega_{0}^{2}+B^{2}\cos{\left(2\Omega_{B}t\right)}}{\Omega_{B}^{2}}-p_{y}(0)\frac{B}{\Omega_{B}}\sin{\left(2\Omega_{B}t\right)}+p_{z}(0)\frac{2B\omega_{0}}{\Omega_{B}^{2}}\sin^{2}{\left(\Omega_{B}t\right)}, \\
    p_{y}(t,B)&=p_{y}(0)\cos{\left(2\Omega_{B}t\right)}+\frac{\sin{\left(2\Omega_{B}t\right)}}{\Omega_{B}}\left(B p_{x}(0)-\omega_{0}p_{z}(0)\right), \\
    p_{z}(t,B)&=p_{x}(0)\frac{2B\omega_{0}}{\Omega_{B}^{2}}\sin^{2}{\left(\Omega_{B}t\right)}+p_{y}(0)\frac{\omega_{0}}{\Omega_{B}}\sin{\left(2\Omega_{B}t\right)}+p_{z}(0)\frac{B^{2}+\omega_{0}^{2}\cos{\left(2\Omega_{B}t\right)}}{\Omega_{B}^{2}}.
\end{align}
\end{subequations}
\end{widetext}

The exact Bloch vector cannot be solved analytically since $\vec{p}(t,B)$ contains terms with both polynomial and oscillatory functions, which are to be integrated under $\Phi(B)$. As we are mainly concerned about the pointer states, we will be employing the long-time limit. This long-time limit will be further divided into two distinct regimes, which are the strong environment regime ($\omega_{0}\ll s_{N}$) and the strong self-dynamics regime ($\omega_{0}\gg s_{N}$).

\subsection{Strong Environment Regime ($\omega_{0}\ll s_{N}$)}
We can employ the stationary phase approximation to derive the Bloch vector. The resulting Bloch vector components are
\begin{widetext}
\begin{subequations}
\label{strong_environment_Bloch_vector}
\begin{align}
    p_{x}(t)&= p_{x}(0)\left[\Xi{\left(\frac{\omega_{0}}{\sqrt{2(1-\gamma^{2})}s_{N}}\right)}+\frac{1}{\sqrt{8\omega_{0}s_{N}^{2}\left(1-\gamma^{2}\right)t^{3}}}\cos{\left(2\omega_{0}t+\frac{3\pi}{4}\right)}\right], \\
    p_{y}(t)&= \sqrt{\frac{\omega_{0}}{2s_{N}^{2}\left(1-\gamma^{2}\right)t}}\left[p_{y}(0)\cos{\left(2\omega_{0}t+\frac{\pi}{4}\right)-p_{z}(0)\sin{\left(2\omega_{0}t+\frac{\pi}{4}\right)}}\right], \\
    \nonumber
    p_{z}(t)&= p_{z}(0)\left[1-\Xi{\left(\frac{\omega_{0}}{\sqrt{2(1-\gamma^{2})}s_{N}}\right)}+\sqrt{\frac{\omega_{0}}{2s_{N}^{2}\left(1-\gamma^{2}\right)t}}\cos{\left(2\omega_{0}t+\frac{\pi}{4}\right)}\right] \\ 
    &+p_{y}(0)\sqrt{\frac{\omega_{0}}{2s_{N}^{2}\left(1-\gamma^{2}\right)t}}\sin{\left(2\omega_{0}t+\frac{\pi}{4}\right)}, 
\end{align}
\end{subequations}
\end{widetext}
where $\Xi(x)=\sqrt{\pi}xe^{x^{2}}\mathrm{erfc}(x)$. 

We can see in Figure \ref{strong_environment_regime} that as $\gamma\rightarrow1$, the steady-state value of $p_{x}(t)$ and the oscillation amplitude increase. On the other hand, the steady-state value of $p_{z}(t)$ decreases while its oscillation amplitude increases as $\gamma\rightarrow1$. The plot for $p_y(t)$ is not shown as it does not compete in the einselection, meaning that the eigenstates of $\hat{\sigma}_{0}^{y}$ will always decohere to the chosen pointer states for all values of $\gamma$ except at the \pt-transition point.

The oscillation here is the attempt of the system to align to a specific polarization axis in the presence of the total Hamiltonian. As the environment interaction is dominant, $p_{z}(t)$ experiences less pronounced decay in its oscillations. However, the interaction weakens as $\gamma\rightarrow1$, allowing all states to experience coherent oscillations despite the presence of environment interactions.

\begin{figure*}[t]
    \centering
    \includegraphics[width=0.48\linewidth]{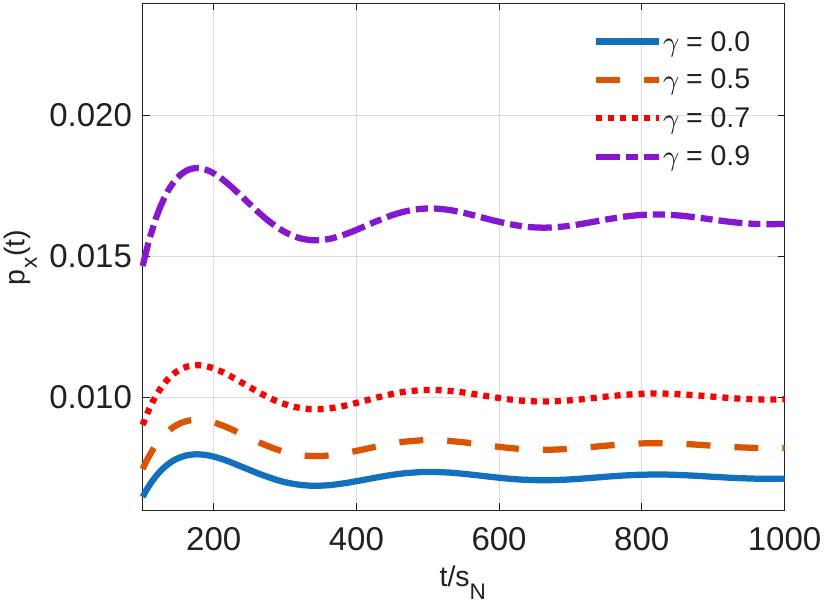}
    \includegraphics[width=0.48\linewidth]{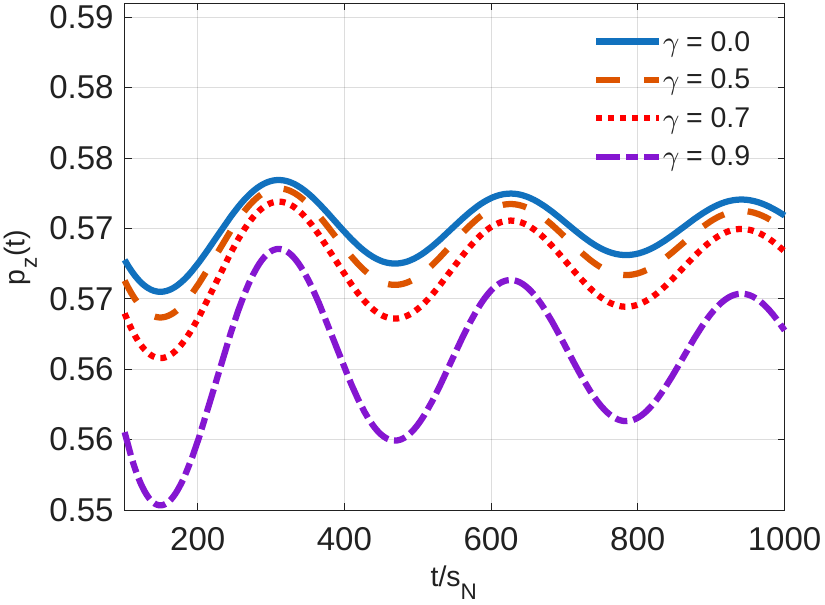}
    \caption{Comparison of Bloch vector components $p_{x}(t)$ and $p_{z}(t)$ for different values of $\gamma$ when $\omega_{0}/s_{N}=0.01$, and $p_{x}(0)=p_{y}(0)=p_{z}(0)=1/\sqrt{3}$. These plots are obtained from equation (\ref{strong_environment_Bloch_vector}).}
    \label{strong_environment_regime}
\end{figure*}

\subsection{Strong Self-Dynamics Regime ($\omega_{0}\gg s_{N}$)}
We can use the expansion $\Omega=\omega_{0}+\frac{B^{2}}{\omega_{0}}$, which will yield
\begin{widetext}
\begin{subequations}
\label{strong_central_spin_Bloch_vector}
\begin{align}
    p_{x}(t)&=p_{x}(0)\left[\Xi{\left(\frac{\omega_{0}}{\sqrt{2(1-\gamma^{2})}s_{N}}\right)}+\frac{s_{N}^{2}(1-\gamma^{2})}{\omega_{0}^{2}}\frac{\cos{\left(2\omega_{0}t+\frac{3}{2}\arctan{\frac{2s_{N}^{2}\left(1-\gamma^{2}\right)t}{\omega_{0}}}\right)}}{\left(1+\left(\frac{2s_{N}^{2}\left(1-\gamma^{2}\right)t}{\omega_{0}}\right)^{2}\right)^{\frac{3}{4}}}\right], \\
    p_{y}(t)&=\left[1+\left(\frac{2s_{N}^{2}\left(1-\gamma^{2}\right)t}{\omega_{0}}\right)^{2}\right]^{-\frac{1}{4}} \left[p_{y}(0)\cos{\left(2\omega_{0}t+\frac{1}{2}\arctan{\frac{2s_{N}^{2}\left(1-\gamma^{2}\right)t}{\omega_{0}}}\right)} \right. \nonumber \\
    & \left. -p_{z}(0)\sin{\left(2\omega_{0}t+\frac{1}{2}\arctan{\frac{2s_{N}^{2}\left(1-\gamma^{2}\right)t}{\omega_{0}}}\right)}\right], \\
    p_{z}(t)&=p_{z}(0)\left[1-\Xi{\left(\frac{\omega_{0}}{\sqrt{2(1-\gamma^{2})}s_{N}}\right)}+\frac{\cos{\left(2\omega_{0}t+\frac{1}{2}\arctan{\frac{2s_{N}^{2}\left(1-\gamma^{2}\right)t}{\omega_{0}}}\right)}}{\left(1+\left(\frac{2s_{N}^{2}\left(1-\gamma^{2}\right)t}{\omega_{0}}\right)^{2}\right)^{\frac{1}{4}}}\right] \nonumber \\
    &+p_{y}(0)\frac{\sin{\left(2\omega_{0}t+\frac{1}{2}\arctan{\frac{2s_{N}^{2}\left(1-\gamma^{2}\right)t}{\omega_{0}}}\right)}}{\left(1+\left(\frac{2s_{N}^{2}\left(1-\gamma^{2}\right)t}{\omega_{0}}\right)^{2}\right)^{\frac{1}{4}}}.
\end{align}
\end{subequations}
\end{widetext}
In contrast to the previous regime, the steady-state value of $p_{x}(t)$ increases as $\gamma\rightarrow1$, while the steady-state value of $p_{z}(t)$ is fixed at $0$, which can be seen in Figure \ref{strong_central_spin_regime}\footnote{A caveat must be taken here that the numerical value of $\omega_{0}/s_{N}$ is technically not valid. However, the value used does not affect the qualitative behavior of the plot. Actual valid values for this approximation exhibit long decay times and rapid oscillations.}.

In this regime, the oscillation is the attempt of the self-Hamiltonian to decouple the interaction from the central spin. The fast oscillations average out the dephasing noise that the central spin experiences from its environment. The pronounced decay in the oscillation amplitude in both $p_{x}(t)$ and $p_{z}(t)$ reflects the pointer states of the system, as their axes of symmetry are along their steady-state value.

\begin{figure*}[t]
    \centering
    \includegraphics[width=0.48\linewidth]{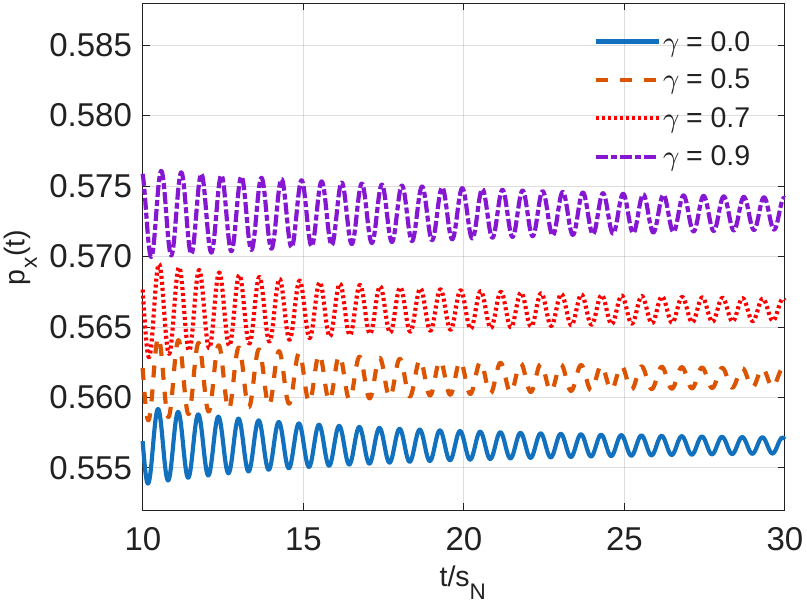}
    \includegraphics[width=0.48\linewidth]{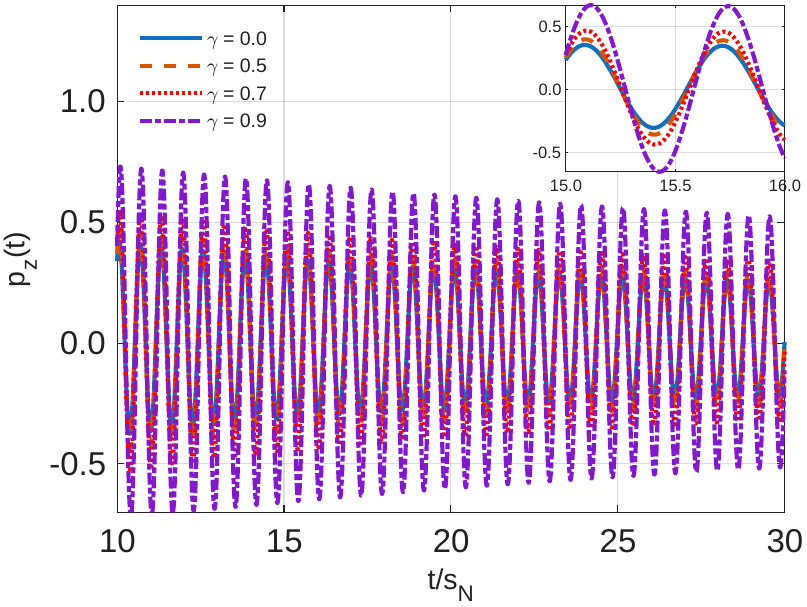}
    \caption{Comparison of Bloch vector components $p_{x}(t)$ and $p_{z}(t)$ for different values of $\gamma$ when $\omega_{0}/s_{N}=5$, and $p_{x}(0)=p_{y}(0)=p_{z}(0)=1/\sqrt{3}$. These plots are obtained from equation (\ref{strong_central_spin_Bloch_vector}).}
    \label{strong_central_spin_regime}
\end{figure*}

\subsection{Pointer States}
Investigating these two limits, we will be able to see that these are almost the same long-time limits as seen in \cite{Cucchietti}, except for the rescaled terms. This is a relic of the effective coupling affecting the variance of the characteristic function. Looking at the resulting Bloch vector components, the function $\Xi(x)$ is present in both limits. This function has the following limits,
\begin{align}
    &\lim_{x\rightarrow0^{+}} \Xi(x)\rightarrow0, \label{strong_environment} \\ 
    &\lim_{x\rightarrow+\infty} \Xi(x)\rightarrow1. \label{strong_self}
\end{align}
The steady-state values for both limits, i.e., $t\rightarrow\infty$, are
\begin{align}
    p_{x}(t)&= p_{x}(0)\Xi{\left(\frac{\omega_{0}}{\sqrt{2(1-\gamma^{2})}s_{N}}\right)}, \\
    p_{y}(t)&= 0,\\
    p_{z}(t)&= p_{z}(0)\left[1-\Xi{\left(\frac{\omega_{0}}{\sqrt{2(1-\gamma^{2})}s_{N}}\right)}\right].
\end{align}

The time-independent function $\Xi(x)$ determines the steady-state value of $p_{x}(t)$ and $p_{z}(t)$, which also decides which pointer states are selected. For intermediate values of $\gamma$, $\omega_{0}/s_{N}$ competes with $\gamma$ to settle which pointer state the system will choose. Nevertheless, $\Xi(x)$ collapses to the limit in equation (\ref{strong_self}) when $\gamma\rightarrow1$. This forces the system to choose the eigenstates of the self-Hamiltonian as its pointer states, regardless of the regime the system is in. This is due to the apparent vanishing of the interaction Hamiltonian at the \pt-transition point. This is similar to how dynamical decoupling protects the central spin states from dephasing due to interactions \cite{Viola_Knill_Lloyd_1999}. The difference in this case is that a tuning parameter in the interaction Hamiltonian drives this decoupling.

In contrast, in dynamical decoupling, any external field perpendicular to the interaction axis is applicable. Now, one might ask, "What if $\hat{\sigma}_{0}^{x}$ in the self-Hamiltonian is replaced by $\hat{\sigma}_{0}^{y}$?" The long-time limit of the Bloch vector components will change such that
\begin{align}
    p_{x}(t)&= 0, \\
    p_{y}(t)&= p_{y}(0)\Xi{\left(\frac{\omega_{0}}{\sqrt{2(1-\gamma^{2})}s_{N}}\right)},\\
    p_{z}(t)&= p_{z}(0)\left[1-\Xi{\left(\frac{\omega_{0}}{\sqrt{2(1-\gamma^{2})}s_{N}}\right)}\right].
\end{align}
Effectively, this means that the system chooses between the eigenstates of $\hat{\sigma}_{0}^{y}$ and $\hat{\sigma}_{0}^{z}$. However, the corresponding pseudo-Hermitian observable for $\hat{\sigma}_{0}^{y}$ is
\begin{equation}
    \hat{\sigma}_{0,pH}^{y}=T^{-1}\hat{\sigma}_{0}^{y}T=\frac{1}{\sqrt{1-\gamma^{2}}}\hat{\sigma}_{0}^{y}-\frac{i\gamma}{\sqrt{1-\gamma^{2}}}\hat{\sigma}_{0}^{z},
\end{equation}
which makes the construction of experimental design for such a system even more complicated. However, it is still a feasible self-Hamiltonian, given that it is pseudo-Hermitian with respect to the metric operator.

\subsection{Purity}
The central spin Bloch vector is directly related to the purity $P$ of its states, through the relation $P=\frac{1}{2}(1+|\vec{p}|^{2})$. The plot for the purity for each regime is shown in Figure \ref{Purity}. The oscillations in both regimes suggest the non-Markovian behavior of the central spin model. This non-Markovianity will be quantified in the succeeding section. For the meantime, we investigate the purity of the states under the influence of the Hamiltonian. We know that purity is a measure of uncertainty about a specific state of a system. As interaction causes decoherence, and consequently dissipates information from the central spin to its environment, purity decreases.

\begin{figure*}
    \centering
    \begin{subfigure}[t]{0.48\linewidth}
        \centering
        \includegraphics[width=\linewidth]{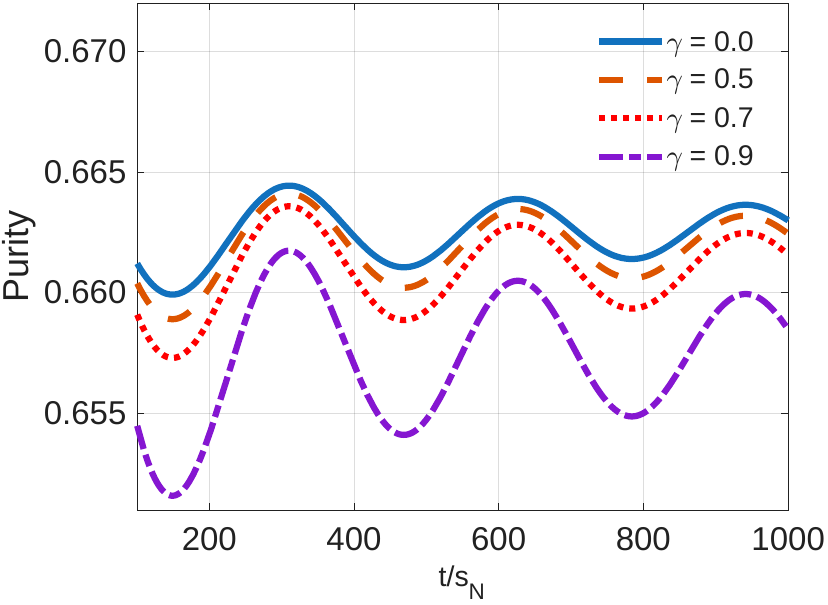}
        \caption{}
        \label{Strong_Environment_Purity}
    \end{subfigure}
    \hfill
    \begin{subfigure}[t]{0.48\linewidth}
        \centering
        \includegraphics[width=\linewidth]{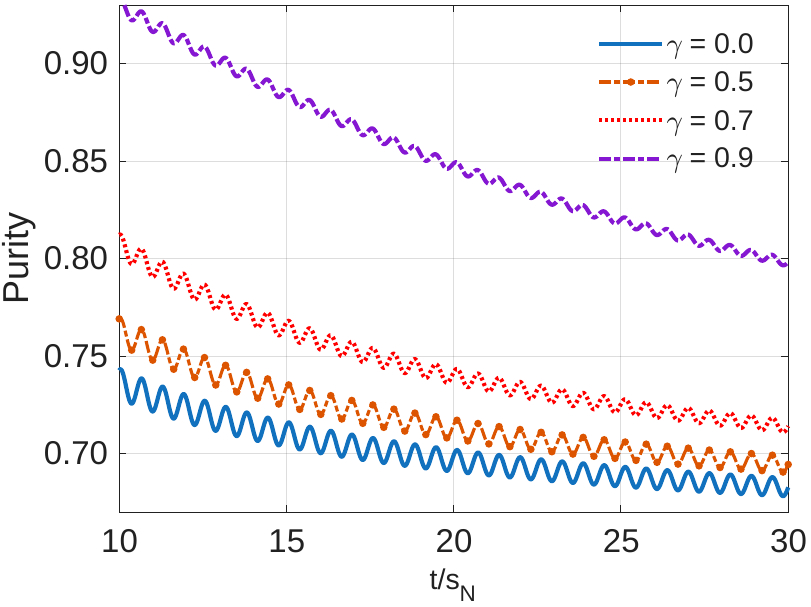}
        \caption{}
        \label{Strong_Central_spin_Purity}
    \end{subfigure}
    \caption{Purity of the central spin states of the system in Figure (a) \ref{strong_environment_regime} and (b) Figure \ref{strong_central_spin_regime}, respectively.}
    \label{Purity}
\end{figure*}

However, it is rather counterintuitive to see that the steady-state value of the purity decreases as $\gamma$ increases in the strong environment regime, as can be seen in Figure \ref{Strong_Environment_Purity}. This is because in section \ref{Heading3}, we have shown that increasing $\gamma$ suppresses decoherence. If so, will this contradiction persist until the \pt-transition point? In Figure \ref{Purity_vs_Gamma}, we see that there is a turning point at $(1-\gamma,P)=(1.33\times10^{-4},7/12)$, from where the steady-state purity increases as $\gamma$ increases. This is interpreted as the point where the self-Hamiltonian finally overpowers the dephasing noise from the environment. This point also represents the value of $\gamma$ that produces the highest dissipation of information to the environment. In terms of the Bloch vector components, this is where $p_{z}(t\rightarrow\infty)=p_{x}(t\rightarrow\infty)$. Starting from this point, $\gamma$ starts forcing the system to choose the eigenstates of the self-Hamiltonian as pointer states. We can then divide purity into three distinct regimes depending on which Bloch vector component dominates the competition for the selection of pointer states, which are shown in Figure \ref{Purity_vs_Gamma}. This shows that the competition between the self-Hamiltonian and interaction Hamiltonian is not as trivial as just the interaction weakening as $\gamma\rightarrow1$. This provides a novel approach for engineering spin systems that are robust under environmental interactions.

\subsection{\pt-transition point}
Neither of the limits employed in this section includes the \pt-transition point, since the metric operator is undefined at this point. However, we can still take the limit of the Bloch vector at this point, since at the \pt-transition point, the characteristic function becomes a Dirac delta function. This will allow for the closed form of the Bloch vector to be analytically solved without having to employ any approximations. The Bloch vector components at the \pt-transition point become
\begin{subequations}
\begin{align}
    p_{x}(t)&=p_{x}(0), \\
    p_{y}(t)&=p_{y}(0)\cos{\left(2\omega_{0}t\right)}-p_{z}(0)\sin{\left(2\omega_{0}t\right)}, \\
    p_{z}(t)&=p_{y}(0)\sin{\left(2\omega_{0}t\right)}+p_{z}(0)\cos{\left(2\omega_{0}t\right)}.
\end{align}
\end{subequations}
This is exactly the dynamics of the Pauli spin operators that is generated by the self-Hamiltonian in the absence of the interaction Hamiltonian. The central spin will experience Larmor precession along the $x$-axis, which will keep the purity of the states conserved indefinitely. Notice that the precession disappears if $\omega_{0}=0$. This complete suppression of decoherence and internal dynamics is the signal of an exceptional point for pseudo-Hermitian systems, where eigenvalues and eigenvectors coalesce, effectively rendering the Hamiltonian defective.

\section{\label{Heading5}\pt-symmetry Effects to Information Backflow and Non-Markovian Characteristic}

The change in the purity revival shown in the previous section raises the question of how \pt-symmetry affects the information backflow in the central spin model. The central spin model is known to exhibit non-Markovian behavior because of the memory retention of the spin bath, regardless of whether it is static or possesses internal dynamics \cite{Coish_Loss_2004, Barnes_Cywinski_Das_Sarma_2012, Coish_Fischer_Loss_2010}.

The information backflow to the central spin, which causes its non-Markovian behavior, is characterized by the increase in the trace distance with respect to time, i.e.,
\begin{equation}
    \frac{d}{dt}D(\hat{\rho}_{1}(t),\hat{\rho}_{2}(t))>0.
\end{equation}
To quantify its non-Markovianity, we can use the Breuer–Laine–Piilo (BLP) measure, defined as 
\begin{equation}
    \mathcal{N}_{BLP}=\max_{\hat{\rho}_{1,2}(0)}\int_{\dot{D}(t)>0}dt \ \dot{D}(t).
    \label{BLP_measure}
\end{equation}
This measure was chosen specifically because of its high sensitivity even in weak coupling regimes, where the system is headed as $\gamma\rightarrow1$ \cite{Rashid_Mala_Bashir_Lone_2025}. Moreover, BLP distinguishes between pure decoherence and reversible phenomena due to the memory of a specific environment through the trace distance. By choosing a pair of orthogonal initial density matrices, we can ensure that the BLP measure obtained is indeed the maximum for any pair of states, as orthogonal pairs are the optimal pairs for quantifying non-Markovianity \cite{Wissmann_Karlsson_Laine_Piilo_Breuer_2012}. Equation (\ref{BLP_measure}) then becomes
\begin{equation}
    \mathcal{N}_{BLP}=\sum_{n}\left[D(\hat{\rho}_{1}(t_{n+1}),\hat{\rho}_{2}(t_{n+1}))-D(\hat{\rho}_{1}(t_{n}),\hat{\rho}_{2}(t_{n}))\right],
\end{equation}
where $(t_{n},t_{n+1})$ are the time-intervals when the trace distance is increasing.

Upon the use of BLP measure, we will quantify the effect of \pt-symmetry on the non-Markovianity. Note that the trace distance is generally not preserved under non-unitary transformations. Nevertheless, it can be proven that under the modified adjoint and inner product, the Dyson map preserves the trace distance (see Theorem \ref{trace_distance_invariance} in Appendix \ref{app:pseudo} for the proof).

We will be taking the trace distance of the density matrices of two arbitrary orthogonal pure states evolving under the influence of the Hamiltonian in equation (\ref{Htot}). We let $\hat{\rho}_{1}(0)$ and $\hat{\rho}_{2}(0)$ be these density matrices. Then, their respective Bloch vector representations $\vec{p}_{1}(0)$ and $\vec{p}_{2}(0)$ are related such that 
\begin{equation}
    \vec{p}_{1}(0)=-\vec{p}_{2}(0).
    \label{orthogonality}
\end{equation}
The reduced density matrices corresponding to these initial density matrices are
\begin{equation}
    \hat{\rho}_{1}(t)=\int \hat{U}_{B}(t)\hat{\rho}_{1}(0)\hat{U}_{B}^{\dagger}(t)\Phi(B)dB
\end{equation}
and
\begin{equation}
    \hat{\rho}_{2}(t)=\int \hat{U}_{B}(t)\hat{\rho}_{2}(0)\hat{U}_{B}^{\dagger}(t)\Phi(B)dB,
\end{equation}
respectively. Then,
\begin{equation}
    \Delta\hat{\rho}(t)=\int \hat{U}_{B}(t)(\hat{\rho}_{1}(0)-\hat{\rho}_{1}(0))\hat{U}_{B}^{\dagger}(t)\Phi(B)dB.
\end{equation}
Decomposing the density matrices into their respective Bloch vector decompositions and substituting equation (\ref{orthogonality}),
\begin{equation}
    \Delta\hat{\rho}(t)=\int \hat{U}_{B}(t)\vec{p}_{1}(0)\hat{U}_{B}^{\dagger}(t)\Phi(B)dB=\vec{p}_{1}(t)
\end{equation}
This is exactly what is stated in equation (\ref{reduced_Bloch_vector}) for arbitrary Bloch vectors. Consequently, 
\begin{equation}
    D(\hat{\rho}_{1}(t),\hat{\rho}_{2}(t))=\frac{1}{2}\mathrm{Tr}\sqrt{\Delta\hat{\rho}(t)^{\dagger}\Delta\hat{\rho}(t)}.
\end{equation}
Since $\Delta\hat{\rho}(t)^{\dagger}=\Delta\hat{\rho}(t)$, then
\begin{align}
    D(\hat{\rho}_{1}(t),\hat{\rho}_{2}(t))&=\frac{1}{2}\mathrm{Tr}\sqrt{p_{1}^{2}(t)\mathbb{I}} \nonumber \\
    &=\sqrt{p_{1,x}^{2}(t)+p_{1,y}^{2}(t)+p_{1,z}^{2}(t)}.
\end{align}
The trace distance can be derived analytically using the Bloch vector components given in equations (\ref{strong_environment_Bloch_vector}) and (\ref{strong_central_spin_Bloch_vector}).

\begin{figure}[t]
    \centering
    \includegraphics[width=\linewidth]{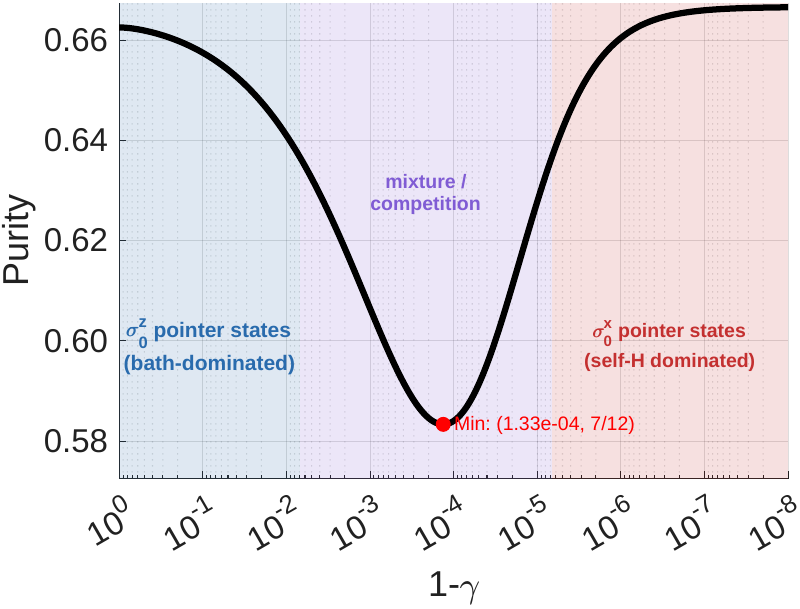}
    \caption{Purity vs. $1-\gamma$ for the system described in Figure \ref{strong_environment_regime}.}
    \label{Purity_vs_Gamma}
\end{figure}

The intervals where trace distance increases cannot be determined analytically, as the oscillations are aperiodic. As such, these intervals are obtained numerically. The BLP measure is also unbounded, especially for models like the central spin systems where information backflow persists indefinitely, albeit in very small amounts after long timescales. Moreover, equations (\ref{strong_environment_Bloch_vector}) and (\ref{strong_central_spin_Bloch_vector}) are derived by utilizing approximations, so their validity is directly tied to the time they will be evaluated. As such, we will limit the time interval used to calculate the BLP measure to ensure a finite value, as well as to ensure the positivity of the density matrix.

The plots of the BLP measure for systems in the strong environment regime and strong self-dynamics regime are in Figure \ref{Strong_Environment_BLP} and Figure \ref{Strong_Central_spin_BLP}, respectively. The former was obtained within the time interval $[2.5\times10^{5},10^{6}] \ t/s_{N}$ because of the very slow oscillations in the trace distance in this regime. This specific choice of time interval is to assure positivity throughout all value of $\gamma$, as evaluating at $t$ less than $2.5\times10^{5}$ can break positivity for some values of $\gamma$ near $1-\left[1\times10^{-8}\right]$. Moreover, the oscillation amplitude in this regime decays more slowly as $\gamma\rightarrow1$. Meanwhile, the latter was obtained within the time interval $[10,10^{2}] \ t/s_{N}$. The trace distance in this regime is characterized by rapid oscillations and sudden decay in the oscillation amplitude. As such, the BLP measure for this regime was obtained in a smaller time interval.

The observed dip in the purity in the strong environment regime from the previous section strongly suggested that the information backflow does not abruptly change as $\gamma\rightarrow1$. We have also investigated the plot of the BLP measure against $\gamma$, which is shown in Figure \ref{Strong_Environment_BLP}. The BLP measure has shown a peak at $(1-\gamma,\mathcal{N}_{BLP})=(1.33\times10^{-4},17.1)$, which is near the turning point of the purity. This strengthens our claim in the previous section that the competition between the central spin Hamiltonian and the interaction Hamiltonian is not as trivial as just the interaction vanishing as $\gamma\rightarrow1$. The information backflow in the strong environment regime changes smoothly as $\gamma$ approaches that specific turning point, which signifies the maximum information backflow. From this point on, \pt-symmetry suppresses decoherence and shields the central spin from the dephasing noise that environmental interaction introduces. We can then divide the BLP measure in this regime into two distinct regimes depending on which decoherence process follows, as can be seen in Figure \ref{Strong_Environment_BLP}.

\begin{figure*}
    \centering
    \begin{subfigure}[t]{0.48\linewidth}
        \centering
        \includegraphics[width=\linewidth]{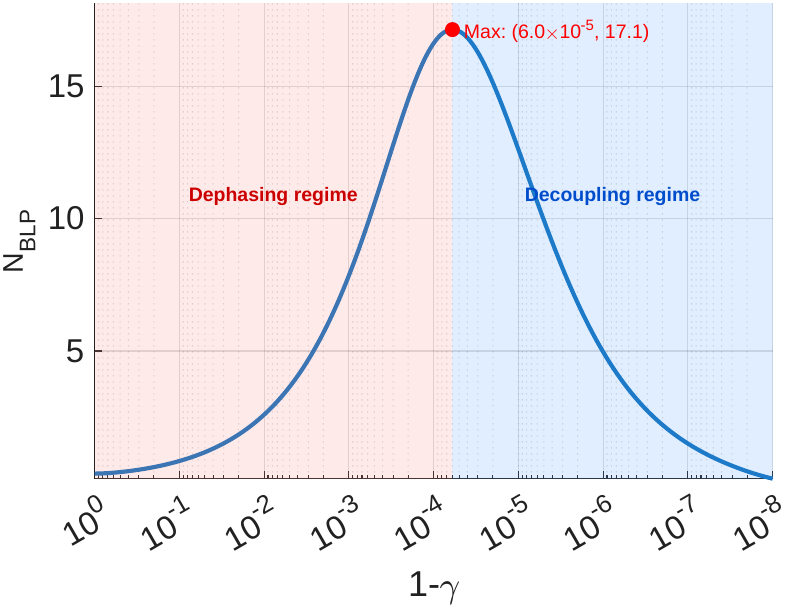}
        \caption{}
        \label{Strong_Environment_BLP}
    \end{subfigure}
    \hfill
    \begin{subfigure}[t]{0.48\linewidth}
        \centering
        \includegraphics[width=\linewidth]{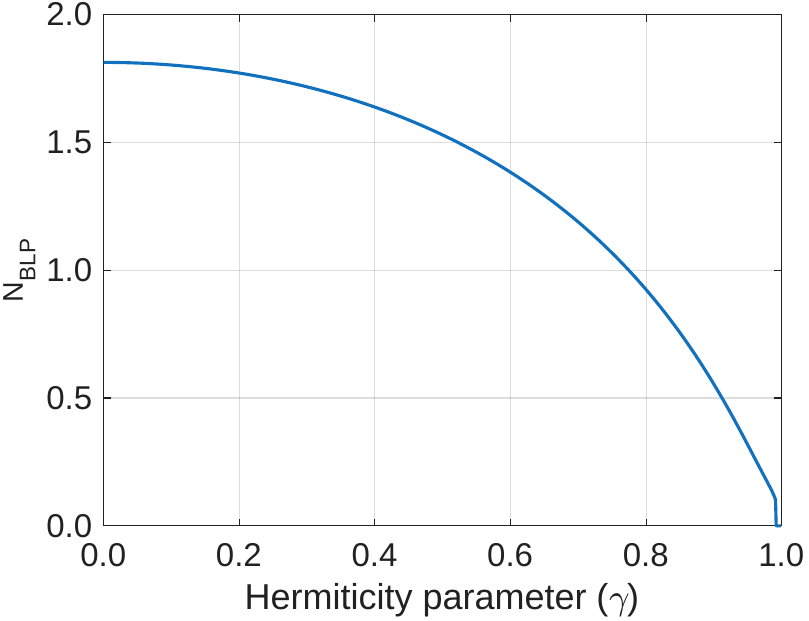}
        \caption{}
        \label{Strong_Central_spin_BLP}
    \end{subfigure}
    \caption{$\mathcal{N}_{\mathrm{BLP}}$ vs. $1-\gamma$ for the system in (a) Figure \ref{strong_environment_regime} and $\mathcal{N}_{\mathrm{BLP}}$ vs. $\gamma$ for the system in (b) Figure \ref{strong_central_spin_regime}.}
    \label{BLP_measure}
\end{figure*}

On the other hand, in the strong self-dynamics regime, the BLP measure monotonically decreases. The weakening of the interaction Hamiltonian as $\gamma\rightarrow1$ allows the self-Hamiltonian to average out the dephasing noise more effectively. This leads to less information spreading into the environment, consequently resulting in less information backflow, as can be seen in Figure \ref{Strong_Central_spin_BLP}. Approaching the \pt-transition point, the interaction effectively vanishes, allowing $\omega_{0}$ to dictate the dynamics of the system. Furthermore, all dynamics freeze out when approaching the exceptional point, where $\omega_{0}=0$, completely shielding the central spin states from noise that can cause decoherence.

\section{\label{Conclusion}Conclusions and Recommendations}

We have applied \pt-symmetry to a central spin model through a \pt-symmetric interaction Hamiltonian. The decoherence of the central spin slowed down in the unbroken \pt-symmetry regime and freezes at the \pt-transition point. The \pt-symmetric interaction Hamiltonian was modified to include a self-Hamiltonian that is pseudo-Hermitian with respect to the metric operator. We have shown that \pt-symmetry can affect the einselection by forcing the system to select the eigenstates of the self-Hamiltonian as $\gamma\rightarrow1$. However, in the strong environment regime, this competition is not trivial. A paradoxical decrease in the purity happens as $\gamma\rightarrow1$. This is contradictory to our result that decoherence slows down as $\gamma\rightarrow1$. However, looking more closely at the values extremely close to $\gamma=1$, we are able to identify a turning point, a minimum, which signifies the value of $\gamma$ that allows the maximum information dissipation to the environment. From that point, the pointer states start to become the eigenstates of the self-Hamiltonian. This is further solidified by the BLP measure we have quantified. A similar turning point, this time a maximum, appears in this measure in the same regime. This measure quantifies the information backflow from the environment to the central spin, which means that at this turning point, the information backflow reaches a saturation point. From the turning point, less information backflows to the central spin, which also means that there is less information that spreads into the environment. Therefore, this turning point represents where the self-Hamiltonian starts to dominate the dynamics, shielding the central spin from the dephasing noise brought about by environmental interaction.

Further extensions of this study should involve more complex and/or realistic interactions, such as Heisenberg interaction, Lipkin-Meshkov-Glick interaction, or Dzyaloshinskii–Moriya interaction endowed with \pt-symmetry. The non-commutativity stemming from these complex interactions can introduce profound memory effects. Moreover, if the Hermiticity parameter $ \ \gamma$ is a time-dependent or site-dependent parameter, a more complicated methodology must be considered due to the constraint that is imposed to maintain an unbroken \pt-symmetry. Nevertheless, it is intriguing to explore the potential nontrivial effects that may arise from these dependencies.

Future work should also investigate whether non-Markovianity also arises in other \pt-symmetric decoherence models, such as spin-boson models. It is interesting to explore whether the same turning-point behavior can also account for the slowing down of decoherence observed from these models. Furthermore, non-Hermitian Hamiltonians are effective descriptions derived from larger, Hermitian microscopic Hamiltonians. Then, it should be possible to derive a microscopic Hamiltonian that reduces, under appropriate approximations, to an effective \pt-symmetric central spin model. Establishing this connection would ground the phenomenological \pt-symmetric approach in a first-principles microscopic theory, strengthening its physical justification.

\flushbottom
\appendix

\section{Review on Pseudo-Hermitian Formalism}
\label{app:pseudo}
This appendix summarizes the mathematical framework of pseudo-Hermitian
quantum mechanics used throughout this work. The results presented here
are standard and follow Refs.~\cite{Mostafazadeh2002a,Mostafazadeh2002b}.

\begin{definition}[Pseudo-Hermitian Hamiltonian]
A linear operator $\hat{H}$ is said to be pseudo-Hermitian if there exists
an invertible positive-definite Hermitian operator $\eta$ satisfying
\begin{equation}
    \hat{H}^{\dagger}=\eta\hat{H}\eta^{-1}.
\label{eq:app_pH}
\end{equation}
If $\hat{H}$ additionally satisfies
\begin{equation}
    [\hat{H},\pt]=0,
\end{equation}
then the spectrum is entirely real within the $\pt$-unbroken regime
\cite{BenderBoettcher,Bender2002}.
\end{definition}

\begin{lemma}[Metric Operator {\cite{Mostafazadeh2002a}}]
Suppose that $\hat{H}$ possesses an exact invertible antilinear symmetry.
Then there exists a positive-definite Hermitian metric operator $\eta$
satisfying Eq.~(\ref{eq:app_pH}).
\end{lemma}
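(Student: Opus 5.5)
The plan is to follow Mostafazadeh's construction for the finite-dimensional (in our application, diagonalizable) case. First, use the exact antilinear symmetry $\Theta$ (here $\Theta=\pt$) to show that the spectrum is real. Let $\hat{H}\ket{\psi_n}=E_n\ket{\psi_n}$. Exactness means that each eigenvector can be chosen invariant under the symmetry, $\Theta\ket{\psi_n}=\ket{\psi_n}$ up to a phase. Applying $\Theta$ to the eigenvalue equation and using $[\hat{H},\Theta]=0$ together with antilinearity gives $E_n^{*}\ket{\psi_n}=E_n\ket{\psi_n}$, so $E_n\in\mathbb{R}$.

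Next, build a biorthonormal system. Since $\hat{H}$ is diagonalizable, take the eigenvectors $\{\ket{\psi_n}\}$ of $\hat{H}$ and the eigenvectors $\{\ket{\phi_n}\}$ of $\hat{H}^{\dagger}$. With the spectrum real, $\hat{H}^{\dagger}\ket{\phi_n}=E_n\ket{\phi_n}$. Normalize so that $\braket{\phi_m}{\psi_n}=\delta_{mn}$ and $\sum_n\ket{\psi_n}\bra{\phi_n}=\mathbb{I}$. Then the spectral decompositions are $\hat{H}=\sum_n E_n\ket{\psi_n}\bra{\phi_n}$ and $\hat{H}^{\dagger}=\sum_n E_n\ket{\phi_n}\bra{\psi_n}$.

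Define
\begin{equation}
\eta=\sum_n\ket{\phi_n}\bra{\phi_n},\qquad \eta^{-1}=\sum_n\ket{\psi_n}\bra{\psi_n}.
\end{equation}
Biorthonormality confirms that these two operators are inverses. Hermiticity of $\eta$ is manifest. Positivity follows from $\bra{v}\eta\ket{v}=\sum_n|\braket{\phi_n}{v}|^{2}$, which is strictly positive for $v\neq0$ because $\{\ket{\phi_n}\}$ is complete. A direct computation then gives $\eta\hat{H}\eta^{-1}=\sum_n E_n\ket{\phi_n}\bra{\psi_n}=\hat{H}^{\dagger}$, which is Eq.~(\ref{eq:app_pH}).

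The main obstacle is the diagonalizability and completeness assumption. At an exceptional point the biorthonormal basis fails to exist, and $\eta$ either diverges or becomes singular. This is consistent with the divergence of Eq.~(\ref{eta}) at $\gamma=1$. I would therefore state the lemma for the unbroken regime, where the eigenvectors are complete, and note that in infinite dimensions one needs the Riesz-basis property in place of completeness. A second subtlety is degeneracy. Here I would choose each degenerate eigenspace basis to be $\Theta$-invariant, which is possible because $\Theta$ restricted to an eigenspace is an antilinear involution up to phase, and then biorthonormalize within the block. Finally, as a check, I would verify that for $\hat{V}_{\mathcal{S}}$ this construction reproduces $\eta\propto e^{\tanh^{-1}(\gamma)\hat{\sigma}_0^x}$, up to the known non-uniqueness of $\eta$.
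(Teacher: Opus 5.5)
Your proof is correct. It is exactly the biorthonormal construction $\eta=\sum_n\ket{\phi_n}\bra{\phi_n}$ from the cited Mostafazadeh papers, which is all the paper relies on, since it gives no proof of its own. Your aside that $\Theta$ restricted to a degenerate eigenspace is an involution up to phase is false for a general antilinear $\Theta$, but nothing depends on it: exactness already supplies the invariant eigenvector needed for reality of each eigenvalue, and building $\eta$ uses only diagonalizability and a real spectrum.

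Your added diagonalizability hypothesis is essential, not a convenience. It is a standing assumption of the cited theorem that the stated lemma omits, and without it the lemma fails. For example, $\hat{V}_{\mathcal{S}}$ at $\gamma=1$ is a nonzero nilpotent real matrix whose only eigenvector $(1,-1)^{T}$ is invariant under complex conjugation $\mathcal{K}$. Yet any positive-definite $\eta$ with $\hat{V}_{\mathcal{S}}^{\dagger}=\eta\hat{V}_{\mathcal{S}}\eta^{-1}$ would make $\eta^{1/2}\hat{V}_{\mathcal{S}}\eta^{-1/2}$ Hermitian and hence diagonalizable, which a nonzero nilpotent matrix is not.
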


\begin{lemma}[Dyson Map {\cite{Mostafazadeh2002b}}]
\label{Lemma_A_2}
Since $\eta$ is positive definite, there exists a unique positive
Hermitian operator
\begin{equation}
    T=\sqrt{\eta}, 
\end{equation}
called the Dyson map, satisfying
\begin{equation}
    T^{\dagger}=T,
\end{equation}
and
\begin{equation}
    T^{2}=\eta.
\end{equation}
\end{lemma}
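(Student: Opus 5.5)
The statement is Lemma~\ref{Lemma_A_2}: a positive-definite Hermitian metric $\eta$ has a unique positive Hermitian square root $T$ with $T^\dagger = T$ and $T^2 = \eta$. I would prove it with the spectral theorem, handling existence and uniqueness separately, and working in the finite-dimensional Hilbert space of the model ($\mathbb{C}^2\otimes(\mathbb{C}^2)^{\otimes N}$). Only the first tensor factor matters here, since $\eta$ acts on the central spin alone.

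\textbf{Existence.} Since $\eta$ is Hermitian, it has a spectral decomposition $\eta=\sum_j \lambda_j P_j$ with distinct real eigenvalues $\lambda_j$ and orthogonal projectors $P_j$ that sum to the identity. Positive definiteness gives $\lambda_j>0$. I would define
\begin{equation}
T=\sum_j \sqrt{\lambda_j}\,P_j ,
\end{equation}
using positive square roots. Then:
\begin{itemize}
\item $T^\dagger=T$, because the coefficients are real and each $P_j$ is Hermitian.
\item $T$ is positive definite and hence invertible, since its eigenvalues $\sqrt{\lambda_j}$ are positive.
\item $T^2=\sum_j\lambda_j P_j=\eta$, using $P_jP_k=\delta_{jk}P_j$.
\end{itemize}
For our metric $\eta=e^{\tanh^{-1}(\gamma)\hat\sigma_0^x}$, this construction gives exactly $T=e^{\frac12\tanh^{-1}(\gamma)\hat\sigma_0^x}$. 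The reason is that $\hat\sigma_0^x$ has eigenvalues $\pm1$, so $\eta$ has eigenvalues $e^{\pm\tanh^{-1}\gamma}$. These are positive and finite precisely for $|\gamma|<1$, which is consistent with the metric diverging at the \pt-transition point.

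\textbf{Uniqueness.} This is the step needing care. Suppose $S$ is positive Hermitian with $S^2=\eta$.

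\emph{Step 1: $S$ commutes with every $P_j$.} Clearly $S$ commutes with $S^2=\eta$. Because the $\lambda_j$ are distinct, each projector is a polynomial in $\eta$: by Lagrange interpolation,
\begin{equation}
P_j=\prod_{k\neq j}\frac{\eta-\lambda_k}{\lambda_j-\lambda_k}.
\end{equation}
Hence $S$ commutes with each $P_j$.

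\emph{Step 2: $S$ acts as $\sqrt{\lambda_j}$ on each eigenspace.} Since $S$ commutes with $P_j$, it leaves each eigenspace $\mathrm{Ran}\,P_j$ invariant. On that subspace $S$ restricts to a positive Hermitian operator whose square is $\lambda_j\mathbb{I}$. Diagonalizing the restriction, each of its eigenvalues $\mu$ satisfies $\mu^2=\lambda_j$ and $\mu>0$, so $\mu=\sqrt{\lambda_j}$. Therefore $S|_{\mathrm{Ran}P_j}=\sqrt{\lambda_j}\,\mathbb{I}$, and so $S=T$.

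The main obstacle is Step 1, the commutation $[S,P_j]=0$. It is the only non-tautological point, and the polynomial-interpolation argument is what I would use to settle it.

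If one wanted an infinite-dimensional version, for a bounded positive $\eta$ the projector formula would be replaced by the continuous functional calculus. In that setting one would approximate $\sqrt{x}$ uniformly by polynomials on the spectrum of $\eta$, which is a compact subset of $(0,\infty)$. This is not needed for our model.
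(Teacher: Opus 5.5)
Your proof is correct. It necessarily differs from the paper's, because the paper gives no argument for this lemma: it is stated as a standard fact with a citation to Mostafazadeh. In the main text the explicit map $T=e^{\frac{1}{2}\tanh^{-1}(\gamma)\hat{\sigma}_{0}^{x}}$ is simply asserted to be $\sqrt{\eta}$ ("can then be easily solved").

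Your version is self-contained. You get existence from the spectral decomposition. For uniqueness, Lagrange interpolation shows that any positive root $S$ commutes with the spectral projectors of $\eta$, and you then pin $S$ down on each eigenspace. Working with projectors rather than individual eigenvectors correctly handles the $2^{N}$-fold degeneracy of $\eta\otimes\mathbb{I}_{\mathcal{E}}$ on the full Hilbert space. Your eigenvalue check $e^{\pm\tanh^{-1}\gamma}$ also supplies something the paper omits: confirmation that the exponential $T$ used in the main text is this unique positive root, and that it exists (positive and finite) exactly for $|\gamma|<1$.

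What the citation buys is generality. The same statement holds for bounded positive operators on arbitrary Hilbert spaces via the functional calculus. Your closing aside on that case is only a sketch: it omits the uniqueness step, and "compact subset of $(0,\infty)$" presupposes that $\eta$ is bounded below. Neither point matters for this finite-dimensional model.

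One remark worth keeping: uniqueness holds only among positive square roots. A Dyson map in the broader sense, $\tilde{T}^{\dagger}\tilde{T}=\eta$, is determined only up to a unitary factor $\tilde{T}=UT$. So the positivity you invoke in Step 2 to select $\mu=+\sqrt{\lambda_j}$ is essential. Strict positivity is not needed, though: $\mu\geq 0$ together with $\mu^{2}=\lambda_j>0$ already forces $\mu=\sqrt{\lambda_j}$.
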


\begin{theorem}[Hermitian Representation{\cite{Mostafazadeh2002b}}]
\label{Hermitian_Representation}
Let
\begin{equation}
    \hat{h}=T\hat{H}T^{-1}.
\end{equation}
Then
\begin{equation}
    \hat{h}^{\dagger}=\hat{h}.
\end{equation}
\end{theorem}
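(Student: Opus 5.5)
The plan is a direct computation of $\hat{h}^{\dagger}$ using only the pseudo-Hermiticity relation in Eq.~(\ref{eq:app_pH}) and the properties of the Dyson map from Lemma~\ref{Lemma_A_2}. First, take the adjoint of the definition, $\hat{h}^{\dagger}=(T\hat{H}T^{-1})^{\dagger}=(T^{-1})^{\dagger}\hat{H}^{\dagger}T^{\dagger}$. Since $T^{\dagger}=T$ and $T$ is invertible, we also have $(T^{-1})^{\dagger}=(T^{\dagger})^{-1}=T^{-1}$. Hence $\hat{h}^{\dagger}=T^{-1}\hat{H}^{\dagger}T$.

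Next, substitute Eq.~(\ref{eq:app_pH}), $\hat{H}^{\dagger}=\eta\hat{H}\eta^{-1}$, together with $\eta=T^{2}$ (so $\eta^{-1}=T^{-2}$). This gives
\begin{equation}
\hat{h}^{\dagger}=T^{-1}T^{2}\hat{H}T^{-2}T=T\hat{H}T^{-1}=\hat{h},
\end{equation}
which is the claim.

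The only step that needs care is a convention check. In Eq.~(\ref{eq:app_pH}) the metric appears as $\hat{H}^{\dagger}=\eta\hat{H}\eta^{-1}$. By contrast, the main text uses the self-adjointness condition $\hat{O}=\eta\hat{O}^{\dagger}\eta^{-1}$, which is equivalent to $\hat{O}^{\dagger}=\eta^{-1}\hat{O}\eta$. With the latter convention, the Hermitian representation is $\eta^{-1/2}\hat{H}\eta^{1/2}$, so $T$ would have to be $\eta^{-1/2}$ rather than $\eta^{1/2}$. I would therefore state the proof strictly under the appendix convention, Eq.~(\ref{eq:app_pH}) with $T=\sqrt{\eta}$, where the cancellation above works exactly. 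I would also add a remark that under the other convention the identical argument goes through with $\eta\to\eta^{-1}$, since $\eta^{-1}$ is also positive definite and Hermitian. Beyond this bookkeeping I expect no real obstacle. Positivity of $\eta$ is used only to guarantee, through Lemma~\ref{Lemma_A_2}, that a Hermitian invertible square root $T$ exists.
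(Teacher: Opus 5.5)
Your proof is correct. The paper states this theorem without proof and defers to Mostafazadeh. Your computation $\hat h^{\dagger}=T^{-1}\hat H^{\dagger}T=T^{-1}T^{2}\hat H T^{-2}T=\hat h$ is the standard argument. It is also in the same style as the paper's later appendix proofs, which likewise use only $T^{\dagger}=T$ and $T^{2}=\eta$.

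Your convention remark is also accurate. The main-text definition $\hat O^{\#}=\eta\hat O^{\dagger}\eta^{-1}$ is inverted relative to Eq.~(\ref{eq:app_pH}). However, the paper's working equation (\ref{adjoint}), read with $\eta=e^{-Q}$ (so $e^{Q}=\eta^{-1}$), becomes $\hat V_{\mathcal S}^{\dagger}=\eta\hat V_{\mathcal S}\eta^{-1}$, which is the appendix convention. With $T=\sqrt{\eta}$ it does yield the Hermitian $\sqrt{1-\gamma^{2}}\,\hat\sigma_0^{z}$ of Eq.~(\ref{effective_coupling}). So proving the theorem under the appendix convention is the right choice.
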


\begin{definition}[Biorthogonal Basis]
Let $\left\{\ket{\psi_{n}}\right\}$ and $\left\{\ket{\phi_{n}}\right\}$ denote the right and left eigenvectors of $\hat{H}$, respectively. They satisfy
\begin{equation}
    \braket{\phi_m}{\psi_n}=\delta_{mn},
\end{equation}
together with the completeness relation
\begin{equation}
    \sum_{n}\ket{\psi_n}\bra{\phi_n}=\hat{I},
\end{equation}
where
\begin{equation}
    \bra{\phi_n}=\bra{\psi_n}\eta.
\end{equation}
\end{definition}

\begin{definition}[Physical Inner Product]
The physical inner product is defined by
\begin{equation}
    \left(\psi,\phi\right)_{\eta}=\bra{\psi}\eta\ket{\phi},
\end{equation}
which is positive definite,
\begin{equation}
    \bra{\psi}\eta\ket{\psi}>0,
\end{equation}
for every nonvanishing state $\ket{\psi}$.
\end{definition}

\begin{definition}[Pseudo-Hermitian Observable]
Let $\hat{O}$ be a Hermitian observable in the equivalent Hermitian
representation. The corresponding observable in the pseudo-Hermitian
representation is
\begin{equation}
    \hat{O}_{\mathrm{pH}}=T^{-1}\hat{O}T.
\end{equation}
It satisfies
\begin{equation}
    \hat{O}_{\mathrm{pH}}^{\#}=\eta^{-1}\hat{O}_{\mathrm{pH}}^{\dagger}\eta=\hat{O}_{\mathrm{pH}}.
\end{equation}
\end{definition}

\begin{theorem}[Expectation-Value Invariance]
Let
\begin{equation}
    \ket{\chi}=T\ket{\psi}
\end{equation}
denote the state in the Hermitian representation. Then
\begin{equation}
    \bra{\chi}\hat{O}\ket{\chi}=\bra{\psi}\eta\hat{O}_{\mathrm{pH}}\ket{\psi}=\bra{\phi}\hat{O}_{\mathrm{pH}}\ket{\psi}.
\end{equation}
\end{theorem}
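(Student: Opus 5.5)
The identity follows by direct substitution of the definitions, reading everything through the Dyson map of Lemma \ref{Lemma_A_2}.

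\textbf{First equality.} I rewrite the Hermitian-representation expectation value in terms of $\ket{\psi}$. Since $\ket{\chi}=T\ket{\psi}$, the corresponding bra is $\bra{\chi}=\bra{\psi}T^{\dagger}$. By Lemma \ref{Lemma_A_2}, $T^{\dagger}=T$, so $\bra{\chi}=\bra{\psi}T$. Then
\begin{equation}
\bra{\chi}\hat{O}\ket{\chi}=\bra{\psi}T\hat{O}T\ket{\psi}.
\end{equation}
Next I invert the definition of the pseudo-Hermitian observable, $\hat{O}_{\mathrm{pH}}=T^{-1}\hat{O}T$, to get $\hat{O}=T\hat{O}_{\mathrm{pH}}T^{-1}$. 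Substituting gives
\begin{equation}
T\hat{O}T=T^{2}\hat{O}_{\mathrm{pH}}T^{-1}T=\eta\,\hat{O}_{\mathrm{pH}},
\end{equation}
using $T^{2}=\eta$. This yields $\bra{\chi}\hat{O}\ket{\chi}=\bra{\psi}\eta\hat{O}_{\mathrm{pH}}\ket{\psi}$, which is the physical inner product $(\psi,\hat{O}_{\mathrm{pH}}\psi)_{\eta}$.

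\textbf{Second equality.} This uses the biorthogonal definition $\bra{\phi}=\bra{\psi}\eta$, applied to the state $\ket{\psi}$ in place of an eigenvector. With that identification, $\bra{\psi}\eta\hat{O}_{\mathrm{pH}}\ket{\psi}=\bra{\phi}\hat{O}_{\mathrm{pH}}\ket{\psi}$ immediately.

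\textbf{Main obstacle.} The algebra is trivial, so the only delicate point is interpretive. The biorthogonal definition was stated for eigenvectors of $\hat{H}$, and I extend it by linearity to a general state. This is legitimate because $\eta$ is a fixed invertible operator, and the relation $\bra{\phi_n}=\bra{\psi_n}\eta$ extends linearly over the complete set $\{\ket{\psi_n}\}$.

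I would also remark that the identity requires $\eta$ to be well-defined, meaning it holds in the unbroken regime $\gamma<1$. I would further note that reality of the left-hand side, guaranteed by Hermiticity of $\hat{O}$, transfers to the pseudo-Hermitian expectation value, consistent with $\hat{O}_{\mathrm{pH}}^{\#}=\hat{O}_{\mathrm{pH}}$.
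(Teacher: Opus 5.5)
Your proof is correct and follows the paper's argument step for step. You write $\bra{\chi}=\bra{\psi}T$, using $T^{\dagger}=T$, which the paper leaves implicit; then reduce $T\hat{O}T$ to $T^{2}\hat{O}_{\mathrm{pH}}=\eta\hat{O}_{\mathrm{pH}}$ and identify $\bra{\psi}\eta$ with $\bra{\phi}$. Your ``extension by linearity'' remark is unnecessary, since the paper simply takes $\bra{\phi}=\bra{\psi}\eta$ as the definition for an arbitrary $\ket{\psi}$ (and that assignment is antilinear in $\ket{\psi}$, not linear).
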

\begin{proof}
Using
\begin{equation}
    \ket{\chi}=T\ket{\psi},
\end{equation}
one finds
\begin{equation}
    \bra{\chi}\hat{O}\ket{\chi}=\bra{\psi}T\hat{O}T\ket{\psi}.
\end{equation}
Since
\begin{equation}
    \hat{O}_{\mathrm{pH}}=T^{-1}\hat{O}T,
\end{equation}
it follows that
\begin{equation}
    T\hat{O}T=T^{2}\hat{O}_{\mathrm{pH}}=\eta\hat{O}_{\mathrm{pH}}.
\end{equation}
Therefore,
\begin{equation}
    \bra{\chi}\hat{O}\ket{\chi}=\bra{\psi}\eta\hat{O}_{\mathrm{pH}}\ket{\psi}.
\end{equation}
Using
\begin{equation}
    \bra{\phi}=\bra{\psi}\eta,
\end{equation}
one obtains
\begin{equation}
    \bra{\chi}\hat{O}\ket{\chi}=\bra{\phi}\hat{O}_{\mathrm{pH}}\ket{\psi},
\end{equation}
which completes the proof.
\end{proof}

\begin{theorem}[Preservation of the Physical Inner Product{\normalfont\cite{Mostafazadeh2002b,MOSTAFAZADEH_2010}}]
Let
\begin{equation}
    \ket{\chi_{n}}=T\ket{\psi_{n}},
\end{equation}
where $T=\sqrt{\eta}$ denotes the Dyson map. Then
\begin{equation}
    \braket{\chi_{m}}{\chi_{n}}=\bra{\psi_{m}}\eta\ket{\psi_{n}}.
\end{equation}
Consequently, the orthonormality of the Hermitian representation,
\begin{equation}
    \braket{\chi_{m}}{\chi_{n}}=\delta_{mn},
\end{equation}
is equivalent to the orthonormality of the pseudo-Hermitian representation,
\begin{equation}
    \bra{\psi_{m}}\eta\ket{\psi_{n}}=\delta_{mn}.
\end{equation}
\end{theorem}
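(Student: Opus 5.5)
The identity follows directly from the definition $\ket{\chi_{n}}=T\ket{\psi_{n}}$ and the two properties of the Dyson map in Lemma \ref{Lemma_A_2}, namely $T^{\dagger}=T$ and $T^{2}=\eta$. I would first form the bra of $\ket{\chi_{m}}$. Since $\ket{\chi_{m}}=T\ket{\psi_{m}}$, its dual is $\bra{\chi_{m}}=\bra{\psi_{m}}T^{\dagger}$, and Hermiticity of $T$ reduces this to $\bra{\psi_{m}}T$. The inner product is then
\begin{equation}
    \braket{\chi_{m}}{\chi_{n}}=\bra{\psi_{m}}T\,T\ket{\psi_{n}}=\bra{\psi_{m}}T^{2}\ket{\psi_{n}}=\bra{\psi_{m}}\eta\ket{\psi_{n}},
\end{equation}
which is the claimed identity. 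It is the same manipulation as in the proof of the expectation-value invariance theorem above, with the operator $\hat{O}$ replaced by the identity.

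The equivalence of the two orthonormality conditions is immediate from this. The identity holds for every pair of indices $(m,n)$, so the left side equals $\delta_{mn}$ for all $m,n$ exactly when the right side does. The forward and reverse directions are the same statement read either way.

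No step is a genuine obstacle. The only point that needs care is justifying $T^{\dagger}=T$. This requires $\eta$ to be a positive-definite Hermitian operator, so that its unique positive square root is itself Hermitian, which is what Lemma \ref{Lemma_A_2} guarantees in the unbroken regime. For the specific map used in the paper, $T=e^{\frac{1}{2}\tanh^{-1}(\gamma)\hat{\sigma}_{0}^{x}}$, this is also clear directly: the exponent is a real multiple of the Hermitian $\hat{\sigma}_{0}^{x}$ whenever $|\gamma|<1$. At $\gamma=1$ the metric diverges, $T$ is undefined, and the theorem is not claimed there.

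I would also note in passing that this identity is consistent with the biorthogonal relation $\bra{\phi_{n}}=\bra{\psi_{n}}\eta$. It gives $\braket{\chi_{m}}{\chi_{n}}=\braket{\phi_{m}}{\psi_{n}}$, so orthonormality of the Hermitian representation coincides with biorthonormality of the left and right eigenvectors.
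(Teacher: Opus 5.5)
Your proof is correct and matches the paper's argument: you write $\bra{\chi_m}=\bra{\psi_m}T^{\dagger}$ and use $T^{\dagger}=T$ and $T^{2}=\eta$ from Lemma~\ref{Lemma_A_2} to get $\braket{\chi_m}{\chi_n}=\bra{\psi_m}\eta\ket{\psi_n}$, then read off the equivalence of the two orthonormality conditions because the identity holds for every pair $(m,n)$. Your side remarks, that $T$ is Hermitian for $|\gamma|<1$ and that the identity reduces to the biorthogonal pairing $\braket{\phi_m}{\psi_n}$, are accurate additions that the paper's proof does not spell out.
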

\begin{proof}
Using the Dyson transformation,
\begin{equation}
    \ket{\chi_n}=T\ket{\psi_n},
\end{equation}
one finds
\begin{equation}
    \braket{\chi_m}{\chi_n}=\bra{\psi_m}T^{\dagger}T\ket{\psi_n}.
\end{equation}
Since
\begin{equation}
    T^{\dagger}=T,
\end{equation}
and
\begin{equation}
    T^{2}=\eta,
\end{equation}
it follows immediately that
\begin{equation}
    \braket{\chi_m}{\chi_n}=\bra{\psi_m}\eta\ket{\psi_n}.
\end{equation}
Therefore,
\begin{equation}
    \braket{\chi_m}{\chi_n}=\delta_{mn}
\end{equation}
if and only if
\begin{equation}
    \bra{\psi_m}\eta\ket{\psi_n}=\delta_{mn},
\end{equation}
which completes the proof.
\end{proof}

\begin{theorem}[Pseudo-Unitary Time Evolution{\normalfont\cite{Mostafazadeh2002b,MOSTAFAZADEH_2010}}]
Let
\begin{equation}
    \hat{U}(t)=e^{-i\hat{H}t}
\end{equation}
be the time-evolution operator generated by the pseudo-Hermitian Hamiltonian $\hat{H}$. Then
\begin{equation}
    \hat{U}^{\dagger}(t)\eta\hat{U}(t)=\eta.
\label{eq:pseudounitary}
\end{equation}
Furthermore, if
\begin{equation}
    \hat{h}=T\hat{H}T^{-1},
\end{equation}
then the corresponding Hermitian evolution operator
\begin{equation}
    \hat{u}(t)=e^{-i\hat{h}t}
\end{equation}
satisfies
\begin{equation}
    \hat{u}(t)=T\hat{U}(t)T^{-1}.
\end{equation}
\end{theorem}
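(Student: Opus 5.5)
Two claims need proof: pseudo-unitarity $\hat{U}^{\dagger}(t)\eta\hat{U}(t)=\eta$, and the intertwining relation $\hat{u}(t)=T\hat{U}(t)T^{-1}$. The cleanest order is to prove the intertwining relation first and derive pseudo-unitarity from it together with the unitarity of $\hat{u}(t)$. Throughout, $T=\sqrt{\eta}$ is the Dyson map of Lemma \ref{Lemma_A_2}: it is Hermitian, positive and invertible, with $T^{2}=\eta$. By Theorem \ref{Hermitian_Representation}, $\hat{h}=T\hat{H}T^{-1}$ is Hermitian.

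\emph{Intertwining.} Conjugation $X\mapsto TXT^{-1}$ is an algebra automorphism, so $T\hat{H}^{k}T^{-1}=(T\hat{H}T^{-1})^{k}=\hat{h}^{k}$ for every $k$. Applying this termwise to the exponential series gives
\begin{equation}
T e^{-i\hat{H}t}T^{-1}=\sum_{k}\frac{(-it)^{k}}{k!}\,T\hat{H}^{k}T^{-1}=e^{-i\hat{h}t}.
\end{equation}
In our setting the Hilbert space is finite-dimensional: the central spin tensored with $N$ bath spins. The series therefore converges absolutely, and the interchange of $T(\cdot)T^{-1}$ with the sum needs no further justification. An alternative that avoids series entirely is to note that both sides solve $i\dot{X}=\hat{h}X$ with $X(0)=I$, and then invoke uniqueness of solutions.

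\emph{Pseudo-unitarity.} Since $\hat{h}$ is Hermitian, $\hat{u}(t)$ is unitary, $\hat{u}^{\dagger}\hat{u}=I$. Substituting $\hat{U}=T^{-1}\hat{u}T$ and using $T^{\dagger}=T$ gives $\hat{U}^{\dagger}=T\hat{u}^{\dagger}T^{-1}$. Hence
\begin{equation}
\hat{U}^{\dagger}\eta\hat{U}=T\hat{u}^{\dagger}T^{-1}\,T^{2}\,T^{-1}\hat{u}T=T\hat{u}^{\dagger}\hat{u}T=T^{2}=\eta .
\end{equation}
As a cross-check that does not pass through $T$, I would also differentiate $F(t)=\hat{U}^{\dagger}\eta\hat{U}$. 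This gives $\dot F=i\hat{U}^{\dagger}(\hat{H}^{\dagger}\eta-\eta\hat{H})\hat{U}$, which vanishes by the pseudo-Hermiticity condition (\ref{eq:app_pH}) rearranged as $\hat{H}^{\dagger}\eta=\eta\hat{H}$. Together with $F(0)=\eta$, this proves the first claim independently.

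\emph{Main obstacle.} The algebra is routine, so the real issue is convention. As stated, (\ref{eq:app_pH}) reads $\hat{H}^{\dagger}=\eta\hat{H}\eta^{-1}$, which is $\hat{H}^{\dagger}\eta=\eta\hat{H}$ only if $\eta$ is replaced by $\eta^{-1}$. The main text instead uses $\hat{O}=\eta\hat{O}^{\dagger}\eta^{-1}$, which is equivalent to $\hat{O}^{\dagger}\eta=\eta\hat{O}$. For the Hermiticity of $\hat{h}=T\hat{H}T^{-1}$ in Theorem \ref{Hermitian_Representation}, the condition actually needed is $\hat{H}^{\dagger}\eta=\eta\hat{H}$. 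I would therefore fix this convention explicitly at the start, checking it against the concrete case $\eta=e^{\tanh^{-1}(\gamma)\hat{\sigma}_{0}^{x}}$, $\hat{H}=\hat{H}_{\mathcal{SE}}$. Once the convention is fixed, both the direct argument and the derivative cross-check go through as written.
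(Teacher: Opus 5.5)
Your main argument is correct and is the paper's proof. You expand the exponential as a power series and use $T\hat{H}^{n}T^{-1}=\hat{h}^{n}$ to get $\hat{u}(t)=T\hat{U}(t)T^{-1}$. Then you read off $\hat{U}^{\dagger}T^{2}\hat{U}=T^{2}$ from $\hat{u}^{\dagger}\hat{u}=\hat{I}$ and $T^{\dagger}=T$. Your derivative cross-check is an extra, independent route that the paper does not give. It obtains pseudo-unitarity from $\hat{H}^{\dagger}\eta=\eta\hat{H}$ alone, needing neither $T$ nor the positivity of $\eta$. The paper's route gets both claims at once, but it relies on the Hermiticity of $\hat{h}$.

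Your ``main obstacle'' paragraph, however, rests on an algebra slip and has the two conventions the wrong way round.

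\begin{itemize}
\item Right-multiplying Eq.~(\ref{eq:app_pH}), $\hat{H}^{\dagger}=\eta\hat{H}\eta^{-1}$, by $\eta$ gives $\hat{H}^{\dagger}\eta=\eta\hat{H}$ directly. No replacement $\eta\to\eta^{-1}$ is needed. The appendix convention is exactly the one that makes $T\hat{H}T^{-1}$ Hermitian with $T=\sqrt{\eta}$ and makes your $\dot{F}$ vanish.
\item The main-text condition $\hat{O}=\eta\hat{O}^{\dagger}\eta^{-1}$ is equivalent to $\hat{O}\eta=\eta\hat{O}^{\dagger}$, i.e.\ $\hat{O}^{\dagger}=\eta^{-1}\hat{O}\eta$. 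It is not equivalent to $\hat{O}^{\dagger}\eta=\eta\hat{O}$. That main-text definition of $\hat{O}^{\#}$ is the one carrying the swapped convention.
\item The concrete check you propose settles this. With $\eta=e^{\tanh^{-1}(\gamma)\hat{\sigma}_{0}^{x}}$ one finds $\eta(\hat{\sigma}_{0}^{z}+i\gamma\hat{\sigma}_{0}^{y})\eta^{-1}=\hat{\sigma}_{0}^{z}-i\gamma\hat{\sigma}_{0}^{y}$, so Eq.~(\ref{eq:app_pH}) holds as written.
\end{itemize}

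So drop the proposed ``fix.'' Suppose you imposed $\hat{H}^{\dagger}=\eta^{-1}\hat{H}\eta$ while keeping $T=\sqrt{\eta}$. Then $T\hat{H}T^{-1}$ would in general fail to be Hermitian, and $\hat{H}^{\dagger}\eta-\eta\hat{H}$ would not vanish, so both of your arguments would break.
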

\begin{proof}
Expanding the exponential,
\begin{equation}
    \hat{u}(t)=\sum_{n=0}^{\infty}\frac{(-it)^{n}}{n!}\hat{h}^{n}.
\end{equation}
Since
\begin{equation}
    \hat{h}=T\hat{H}T^{-1},
\end{equation}
one has
\begin{equation}
    \hat{h}^{n}=T\hat{H}^{n}T^{-1},
\end{equation}
for every non-negative integer $n$. Hence,
\begin{equation}
    \hat{u}(t)=T\left(\sum_{n=0}^{\infty}\frac{(-it)^{n}}{n!}\hat{H}^{n}\right)T^{-1},
\end{equation}
which gives
\begin{equation}
    \hat{u}(t)=T\hat{U}(t)T^{-1}.
\end{equation}
Since
\begin{equation}
    \hat{u}^{\dagger}(t)\hat{u}(t)=\hat{I},
\end{equation}
it follows that
\begin{equation}
    \hat{U}^{\dagger}(t)T^{2}\hat{U}(t)=T^{2}.
\end{equation}
Using
\begin{equation}
    T^{2}=\eta,
\end{equation}
one obtains Eq.~(\ref{eq:pseudounitary}).\
\end{proof}

\begin{definition}[Density Operator{\normalfont\cite{MOSTAFAZADEH_2010}}]
The density operator in the pseudo-Hermitian representation is defined by
\begin{equation}
    \hat{\rho}_{\mathrm{pH}}=T^{-1}\hat{\rho}T,
\end{equation}
where $\hat{\rho}$ denotes the density operator in the equivalent Hermitian representation.
\end{definition}

\begin{theorem}[Metric Invariance of the Trace Distance]\label{trace_distance_invariance}
Let
\begin{equation}
    \Delta\hat{\rho}=\hat{\rho}_{1,\mathrm{pH}}-\hat{\rho}_{2,\mathrm{pH}},
\end{equation}
where
\begin{equation}
    \hat{\rho}_{i,\mathrm{pH}}=T^{-1}\hat{\rho}_{i}T,
    \qquad
    i=1,2.
\end{equation}
Then the trace distance satisfies
\begin{equation}
    D_{\eta}\left(\hat{\rho}_{1,\mathrm{pH}},\hat{\rho}_{2,\mathrm{pH}}\right)=D\left(\hat{\rho}_{1},\hat{\rho}_{2}\right),
\end{equation}
where
\begin{equation}
    D_{\eta}=\frac{1}{2}\mathrm{Tr}\left[\sqrt{(\Delta\hat{\rho})^{\#}(\Delta\hat{\rho})}\right].
\end{equation}
\end{theorem}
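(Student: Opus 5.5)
The plan is a direct similarity-transformation argument. First I rewrite $(\Delta\hat{\rho})^{\#}\Delta\hat{\rho}$ as a $T$-conjugate of $(\hat{\rho}_1-\hat{\rho}_2)^2$. Then I show that the square root commutes with this similarity. Finally, cyclicity of the trace removes $T$.

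\textbf{Step 1: $\Delta\hat{\rho}$ is self-adjoint.} Set $\Delta\hat{\rho}_h=\hat{\rho}_1-\hat{\rho}_2$, which is Hermitian because both density operators are. By the Density Operator definition, $\Delta\hat{\rho}=T^{-1}\Delta\hat{\rho}_h T$. Since $T^{\dagger}=T$ (Lemma~\ref{Lemma_A_2}), we get $\Delta\hat{\rho}^{\dagger}=T\Delta\hat{\rho}_h T^{-1}$. I use the pseudo-adjoint as written in the Pseudo-Hermitian Observable definition of this appendix, $\hat{O}^{\#}=\eta^{-1}\hat{O}^{\dagger}\eta$ with $\eta=T^2$. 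Then
\begin{equation}
(\Delta\hat{\rho})^{\#}=T^{-2}\,T\Delta\hat{\rho}_h T^{-1}\,T^{2}=T^{-1}\Delta\hat{\rho}_h T=\Delta\hat{\rho}.
\end{equation}
So $\Delta\hat{\rho}$ is self-adjoint in the pseudo-Hermitian sense, which is exactly the property the Pseudo-Hermitian Observable definition asserts. It follows that
\begin{equation}
(\Delta\hat{\rho})^{\#}\Delta\hat{\rho}=T^{-1}(\Delta\hat{\rho}_h)^{2}T.
\end{equation}
The main-text convention $\eta\hat{O}^{\dagger}\eta^{-1}$ points the other way. I will remark that, with that convention, the same conclusion holds for the metric relevant to the representation $T\hat{O}T^{-1}$, so the argument is only a matter of fixing the consistent convention.

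\textbf{Step 2: the square root commutes with the similarity.} This is the step I expect to be the main obstacle, because $T^{-1}(\Delta\hat{\rho}_h)^2T$ is not Hermitian, so "$\sqrt{\cdot}$" must be given a meaning. I will define it by functional calculus. The operator $A=(\Delta\hat{\rho}_h)^2$ is Hermitian and positive semidefinite, so $S=T^{-1}AT$ is diagonalizable with the same nonnegative spectrum $\{\lambda_j\}$. Take the principal square root $\sqrt{S}$, meaning the unique square root with nonnegative spectrum that is a polynomial in $S$. Choose a Lagrange interpolation polynomial $p$ with $p(\lambda_j)=\sqrt{\lambda_j}$. Then $\sqrt{A}=p(A)$, and
\begin{equation}
p(T^{-1}AT)=T^{-1}p(A)T,
\end{equation}
since $(T^{-1}AT)^n=T^{-1}A^nT$, as in the proof of Pseudo-Unitary Time Evolution. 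Hence $\sqrt{(\Delta\hat{\rho})^{\#}\Delta\hat{\rho}}=T^{-1}|\Delta\hat{\rho}_h|T$.

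\textbf{Step 3: trace cyclicity.} Taking half the trace and using cyclicity,
\begin{equation}
D_\eta=\tfrac12\mathrm{Tr}\bigl(T^{-1}|\Delta\hat{\rho}_h|T\bigr)=\tfrac12\mathrm{Tr}|\Delta\hat{\rho}_h|=D(\hat{\rho}_1,\hat{\rho}_2),
\end{equation}
which is the claim. Here $\Delta\hat{\rho}_h^{\dagger}=\Delta\hat{\rho}_h$ ensures that $|\Delta\hat{\rho}_h|=\sqrt{\Delta\hat{\rho}_h^{\dagger}\Delta\hat{\rho}_h}$, the usual trace-norm integrand. Throughout, $T$ is bounded and invertible because we work strictly inside the unbroken regime $\gamma<1$.
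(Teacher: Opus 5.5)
Your proof is correct and follows the paper's argument. Both use the appendix convention $\hat{O}^{\#}=\eta^{-1}\hat{O}^{\dagger}\eta$ to write $(\Delta\hat{\rho})^{\#}\Delta\hat{\rho}=T^{-1}(\hat{\rho}_1-\hat{\rho}_2)^{2}T$, then remove $T$ by similarity invariance of the trace. Your interpolation-polynomial argument for $\sqrt{T^{-1}AT}=T^{-1}\sqrt{A}\,T$ makes explicit a step the paper passes over, and your flag on the convention is well placed, since the main-text choice $\eta\hat{O}^{\dagger}\eta^{-1}$ together with $\hat{\rho}_{\mathrm{pH}}=T^{-1}\hat{\rho}T$ would not give the result.
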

\begin{proof}
Using the similarity transformation,
\begin{equation}
    \Delta\hat{\rho}=T^{-1}\left(\hat{\rho}_{1}-\hat{\rho}_{2}\right)T,
\end{equation}
together with the pseudo-Hermitian adjoint,
\begin{equation}
    (\Delta\hat{\rho})^{\#}=\eta^{-1}(\Delta\hat{\rho})^{\dagger}\eta,
\end{equation}
one obtains
\begin{equation}
    (\Delta\hat{\rho})^{\#}(\Delta\hat{\rho})=T^{-1}\left(\hat{\rho}_{1}-\hat{\rho}_{2}\right)^{\dagger}\left(\hat{\rho}_{1}-\hat{\rho}_{2}\right)T.
\end{equation}
Since the trace is invariant under similarity transformations,
{\small
\begin{equation}
    \mathrm{Tr}\left[\sqrt{(\Delta\hat{\rho})^{\#}(\Delta\hat{\rho})}\right]=\mathrm{Tr}\left[\sqrt{\left(\hat{\rho}_{1}-\hat{\rho}_{2}\right)^{\dagger}\left(\hat{\rho}_{1}-\hat{\rho}_{2}\right)}\right],
\end{equation}}
which immediately yields
\begin{equation}
    D_{\eta}\left(\hat{\rho}_{1,\mathrm{pH}},\hat{\rho}_{2,\mathrm{pH}}\right)=D\left(\hat{\rho}_{1},\hat{\rho}_{2}\right).
\end{equation}
\end{proof}

\section{Derivation of Appropriate $\mathcal{P}$ and $\mathcal{T}$ Operators of $V_{\mathcal{S}}$}
To ensure that $V_{\mathcal{S}}$ is \pt-symmetric, there must exist $\mathcal{P}$ and $\mathcal{T}$ operators such that
\begin{equation}
    [\hat{H},\mathcal{PT}]=0.
\end{equation}
We know that $\mathcal{P}$ is a unitary operator, while $\mathcal{T}$ is an antiunitary operator, often decomposed into
\begin{equation}
    \mathcal{T}=U\mathcal{K},
\end{equation}
where $U$ is a unitary operator and $\mathcal{K}$ is the complex conjugation operator. However, $\mathcal{P}$ and $\mathcal{T}$ do not need to be symmetries of the system individually, since the system is \pt-symmetric. The \pt-symmetric Hamiltonian $V_{\mathcal{S}}$ in this paper is $\hat{\sigma}_{0}^{z}+i\gamma\hat{\sigma}_{0}^{y}$, which is just the well-known \pt-symmetric Hamiltonian $\hat{\sigma}_{0}^{x}+i\gamma\hat{\sigma}_{0}^{z}$ subjected under a unitary transformation $U$, i.e.,
\begin{equation}
    \hat{\sigma}_{0}^{z}+i\gamma\hat{\sigma}_{0}^{y}=U\left(\hat{\sigma}_{0}^{x}+i\gamma\hat{\sigma}_{0}^{z}\right)U^{\dagger},
\end{equation}
where $U = \frac{1}{\sqrt{2}} \begin{pmatrix} 1 & -i \\ 1 & i \end{pmatrix}$. That means that the \pt \ operator must transform accordingly. The \pt \ operator of $\hat{\sigma}_{0}^{x}+i\gamma\hat{\sigma}_{0}^{z}$ is 
\begin{equation}
    \mathcal{PT}=\hat{\sigma}_{0}^{x}\mathcal{K}.
\end{equation}
Applying the unitary transformation to the \pt \ operator, 
\begin{align}
    U(\mathcal{PT})U^{\dagger}&=\frac{1}{\sqrt{2}} \begin{pmatrix} 1 & -i \\ 1 & i \end{pmatrix}\left(\hat{\sigma}_{0}^{x}\mathcal{K}\right)\frac{1}{\sqrt{2}} \begin{pmatrix} 1 & 1 \\ i & -i \end{pmatrix} \nonumber \\
    &=\frac{1}{\sqrt{2}} \begin{pmatrix} 1 & -i \\ 1 & i \end{pmatrix}\hat{\sigma}_{0}^{x}\frac{1}{\sqrt{2}} \begin{pmatrix} 1 & 1 \\ -i & i \end{pmatrix}\mathcal{K} \nonumber \\
    &=\frac{1}{2}\begin{pmatrix} -2i & 0 \\ 0 & 2i \end{pmatrix}\mathcal{K} \nonumber \\
    &=-i\hat{\sigma}_{0}^{z}\mathcal{K}.
\end{align}
To confirm whether this \pt-operator is indeed a symmetry of $V_{\mathcal{S}}$, we check their commutation, 
\begin{align*}
    [V_{\mathcal{S}},-i\hat{\sigma}_{0}^{z}\mathcal{K}]&=-i[\hat{\sigma}_{0}^{z},\hat{\sigma}_{0}^{z}\mathcal{K}]+\gamma[\hat{\sigma}_{0}^{y},\hat{\sigma}_{0}^{z}\mathcal{K}] \nonumber \\
    &=-i\left(\mathcal{K}-\hat{\sigma}_{0}^{z}\mathcal{K}\hat{\sigma}_{0}^{z}\right)+\gamma\left(\hat{\sigma}_{0}^{y}\hat{\sigma}_{0}^{z}\mathcal{K}-\hat{\sigma}_{0}^{z}\mathcal{K}\hat{\sigma}_{0}^{y}\right) \nonumber \\
    &=-i\left(\mathcal{K}-\mathcal{K}\right)+\gamma\left(\hat{\sigma}_{0}^{y}\hat{\sigma}_{0}^{z}\mathcal{K}+\hat{\sigma}_{0}^{z}\hat{\sigma}_{0}^{y}\mathcal{K}\right) \nonumber \\
    &=\gamma\{\hat{\sigma}_{0}^{y},\hat{\sigma}_{0}^{z}\}\mathcal{K}=0.
\end{align*}
Then, the appropriate \pt \ operator for $V_{\mathcal{S}}$ is $-i\hat{\sigma}_{0}^{z}\mathcal{K}$.

\section{Review on Decoherence and Einselection in the Central Spin Model}\label{Review_Cucchietti}
\subsection{Decoherence in the Central Spin Model}
This appendix briefly reviews the decoherence formalism developed by Cucchietti \textit{et al.}~\cite{Cucchietti}, which serves as the basis for the pseudo-Hermitian extension presented in the main text.

The central spin model is one of the most widely used theoretical models for studying decoherence in quantum systems, particularly in spin-based quantum information processing~\cite{Prokofev2000, Schliemann2003, Zurek2003, Cucchietti}. In this model, the interaction between a central spin and its surrounding spin bath is described by the interaction Hamiltonian
\begin{equation}
    H_{\mathcal{SE}}=\sum_{k=1}^{N}g_{k}\hat{\sigma}_{0}^{i}\otimes\hat{\sigma}_{k}^{i}.
\end{equation}
Here, $g_{k}$ denotes the interaction strength between the central spin and the $k$th bath spin, while $\hat{\sigma}_{k}^{i}$ is the Pauli operator acting along the $i$th direction. Following the formulation of Ref.~\cite{Cucchietti}, we consider the initial product state
\begin{equation}
    \ket{\Psi\!\left(0\right)}=\left(a\ket{0}+b\ket{1}\right)\left(\sum_{n=0}^{2^{N}-1}c_{n}\ket{n}\right),
\end{equation}
where $\ket{0}$ and $\ket{1}$ are eigenstates of the interaction Hamiltonian acting on the central spin with eigenvalues $+1$ and $-1$, respectively. The coefficients $a$, $b$, and $c_{n}$ are complex probability amplitudes satisfying the normalization conditions
\begin{equation}
    |a|^{2}+|b|^{2}=1,
    \qquad
    \sum_{n}|c_{n}|^{2}=1,
\end{equation}
and $\ket{n}$ denotes the $n$th basis state of the spin bath. The state evolves according to
\begin{equation*}
    \ket{\Psi\!\left(t\right)}=a\ket{0}\left(\sum_{n=0}^{2^{N}-1}c_{n}e^{-iB_{n}t}\ket{n}\right)+b\ket{1}\left(\sum_{n=0}^{2^{N}-1}c_{n}e^{iB_{n}t}\ket{n}\right),
\end{equation*}
where
\begin{equation}
    B_{n}=\sum_{k=1}^{N}\left(-1\right)^{n_{k}}g_{k},
\end{equation}
and $n_{k}$ denotes the $k$th binary digit of the integer $n$. The corresponding density operator is
\begin{equation}
    \hat{\rho}\!\left(t\right)=\ket{\Psi\!\left(t\right)}\bra{\Psi\!\left(t\right)}.
\end{equation}
Tracing over the degrees of freedom of the spin bath yields the reduced density matrix of the central spin~\cite{Zurek2003,Cucchietti},
\begin{align}
    \hat{\rho}_{\mathcal{S}}\!\left(t\right)&=|a|^{2}\ket{0}\bra{0}+|b|^{2}\ket{1}\bra{1} \nonumber \\
    &+ab^{*}r\!\left(t\right)\ket{0}\bra{1}+a^{*}br^{*}\!\left(t\right)\ket{1}\bra{0},
\end{align}
where $r\!\left(t\right)$ is the decoherence factor, also known as the decoherence function or Loschmidt amplitude in the central-spin literature~\cite{Cucchietti,Gorin2006}. The decoherence factor determines the decay of the off-diagonal coherence of the reduced density matrix. This can be expressed as the Fourier transform of the characteristic function,
\begin{equation}
    r\!\left(t\right)=\int e^{-iBt}\Phi\!\left(B\right)\,dB,
    \label{decoherence_factor}
\end{equation}
where $\Phi\!\left(B\right)$ denotes the probability distribution of the effective coupling energies~\cite{Cucchietti,Gorin2006}. For sufficiently large bath, the discrete distribution of coupling energies may be approximated by a continuous distribution. Here, it is assumed that the effective coupling energies are statistically independent and possess a finite variance. The central limit theorem implies that the characteristic function approaches a Gaussian distribution~\cite{Cucchietti,Feller1968},
\begin{equation}
    \Phi\!\left(B\right)=\frac{1}{\sqrt{2\pi s_{N}^{2}}}\exp\!\left[-\frac{\left(B-\bar{B}_{N}\right)^{2}}{2s_{N}^{2}}\right].
\end{equation}
Here, $\bar{B}_{N}$ and $s_{N}^{2}$ denote the mean and variance of the coupling-energy distribution, respectively. Taking its Fourier transform, we get the well-known Gaussian decoherence in the central spin model,
\begin{equation}
    r(t)=e^{-s_{N}^{2}t^{2}/2}.
\end{equation}
The decoherence timescale $\tau_{D}$ in this model is 
\begin{equation}
    \tau_{D}=\sqrt{\frac{2}{s_{N}^{2}}}.
\end{equation}

\subsection{Einselection in the Central Spin Model}
Rather than analyzing the reduced density matrix directly, it is often convenient to employ the equivalent Bloch-vector representation~\cite{Nielsen2010},
\begin{equation}
    \hat{\rho}=\frac{1}{2}\left(\mathbb{I}+\vec{p}\cdot\vec{\hat{\sigma}}\right),
    \label{Bloch_vector_decomposition}
\end{equation}
where $\mathbf{p}$ is the Bloch vector of the central spin. This enables a geometric visualization of the time evolution as a contraction of the Bloch sphere. The pointer states are the points on the sphere that remain farthest from the origin at time $t$. The time evolution of the Bloch vector is given by
\begin{equation}
    \vec{p}\!\left(t\right)=\int \hat{U}_{B}\!\left(t\right)\vec{p}\!\left(0\right)\hat{U}_{B}^{\dagger}\!\left(t\right)\Phi\!\left(B\right)\,dB,
\end{equation}
where $U_{B}\!\left(t\right)$ denotes the time-evolution operator corresponding to a particular effective coupling energy $B$. The long-time behavior of the Bloch vector determines the preferred pointer states of the system, namely the stable states selected through the interaction with the environment through environment-induced superselection (einselection)~\cite{Zurek_1981, Zurek_1982, Zurek2003}. These pointer states, which are selected after a very long time, are typically time-independent and arise from the competing internal dynamics and interactions. For a given central spin (spin-$1/2$) immersed in a static bath with Hamiltonian 
\begin{equation}
    \hat{H}=\hat{H}_{\mathcal{S}}+\hat{H}_{\mathcal{SE}},
\end{equation}
such that $[\hat{H},\hat{H}_{\mathcal{S}}]\neq0$ and $\hat{H}_{\mathcal{SE}}=\hat{V}_{\mathcal{S}}\otimes{\hat{V}_{\mathcal{E}}}$, the time evolution operator
\begin{equation}
    U(t)=\mathrm{exp}\left[-i\left(\hat{H}_{\mathcal{S}}+\hat{H}_{\mathcal{SE}}\right)t\right]
\end{equation}
can be expanded into
\begin{equation}
    U(t)=\sum_{n=0}^{2^{N}-1} \hat{U}_{B_{n}}(t)\otimes \ket{n}\bra{n}.
\end{equation}
Here, $\{\ket{n}\}$ and $B_{n}$ are the set of the eigenstates and the corresponding eigenvalues of $\hat{V}_{\mathcal{E}}$. This means that we can represent $\hat{V}_{\mathcal{E}}$ as
\begin{equation}
    \hat{V}_{\mathcal{E}}=\sum_{n=0}^{2^{N}-1}B_{n}\ketbra{n}{n}.
\end{equation}
Then, 
\begin{equation}
    \hat{U}_{B_{n}}(t)=\mathrm{exp}\left[-i\left(\hat{H}_{\mathcal{S}}+B_{n}\hat{V}_{\mathcal{S}}\right)t\right].
    \label{rotation}
\end{equation}
Equation (\ref{rotation}) can be interpreted as a rotation for a given Bloch vector along the plane that $\hat{H}_{\mathcal{S}}$ and $\hat{V}_{\mathcal{S}}$ act on. Expanding equation (\ref{rotation}) into a power series and exploiting the algebra of Pauli matrices, we can get
\begin{equation}
    \hat{U}_{B_{n}}(t)=I\cos{\left(\Omega_{n}t\right)}-i\frac{\left(B_{n}\hat{V}_{\mathcal{S}}+\hat{H}_{\mathcal{S}}\right)}{\Omega_{n}}\sin{\left(\Omega_{n}t\right)},
\end{equation}
where $\Omega_{n}^{2}=\frac{1}{2}\mathrm{Tr}\left[\left(B_{n}\hat{V}_{\mathcal{S}}+\hat{H}_{\mathcal{S}}\right)^{2}\right]$. 


\bibliography{References}

\end{document}